\documentclass[abstract=true]{scrartcl}
\usepackage{authblk}

\usepackage{hyperref}

\usepackage{natbib}
\bibliographystyle{apalike}

\def\U{{\vec{U}}}
\def\u{{\vec{u}}}

\usepackage{multirow}
\usepackage{multicol}
\usepackage{array,booktabs}
\usepackage{algorithm, algorithmic}

\usepackage{graphicx}
\graphicspath{{figures/}{../figures/}}

\usepackage{adjustbox}

\usepackage{amsfonts, amsmath, amssymb, amsthm, bm, 
            latexsym, mathrsfs, thmtools, wasysym}
\usepackage{cleveref}
\declaretheorem[name=Theorem, refname={theorem, theorems}, Refname={Theorem, Theorems}]{theorem}
\declaretheorem[name=Definition, refname={definition, definitions}, Refname={Definition, Definitions}, sibling=theorem]{definition}
\declaretheorem[name=Lemma, refname={lemma, lemmas}, Refname={Lemma, Lemmas}, sibling=theorem]{lemma}
\declaretheorem[name=Assumption, refname={assumption, assumptions}, Refname={Assumption, Assumptions}, sibling=theorem]{assumption}
\declaretheorem[name=Notation, refname={notation, notations}, Refname={Notation, Notations}, sibling=theorem]{notation}

\renewcommand{\vec}[1]{\mathbf{#1}}

\usepackage{color}

\definecolor{darkgreen}{rgb}{0.3, 0.5, 0.0}

\def\flexts{{FlexCodeTS}}
\def\hbu{{\widehat{\beta}_i(\u)}}
\def\bu{{\beta_i(\u)}}
\def\hfi{{\hat f_I(y|\u)}}

\def\E{{\mathbb E}}
\def\I{{\mathbb I}}
\def\P{{\mathbb P}}
\def\U{{\mathbf{U}}}
\def\u{{\mathbf{u}}}
\def\V{{\mathbb V}}

\def\x{{\vec{x}}}

\def\sF{{\mathcal{F}}}
\def\sY{{\mathcal{Y}}}

\usepackage{longtable}

\title{Flexible conditional density estimation for time series}
\author[1]{Gustavo Grivol\thanks{ggrivol@gmail.com}}
\author[2]{Rafael Izbicki\thanks{rafaelizbicki@gmail.com}}
\author[3]{Alex A. Okuno\thanks{akira.okuno@outlook.com}}
\author[3]{Rafael B.Stern\thanks{rbstern@gmail.com}}
\affil[1]{Stern School of Business, New York University, USA}
\affil[2]{Departamento de Estatística, Universidade Federal de São Carlos, Centro de Ciências Exatas e de Tecnologia, Brazil}
\affil[3]{Departamento de Estatística, Instituto de Matemática e Estatística, Universidade de São Paulo, Brazil}

\begin{document}

\maketitle

\begin{abstract}
 This paper introduces \flexts,
 a new conditional density estimator for
 time series. \flexts \ is a flexible 
 nonparametric method,
 which can be based on an arbitrary
 regression method.
 It is shown that \flexts \ inherits
 the rate of convergence of 
 the chosen regression method.
 Hence, \flexts \ can adapt its
 convergence by employing 
 the regression method that 
 best fits the structure of data.
 From an empirical perspective,
 \flexts \ is compared to
 NNKCDE and GARCH in both 
 simulated and real data.
 \flexts\ is shown to generally obtain
 the best performance among
 the selected methods according to
 either the CDE loss or the pinball loss.
 \par\vskip\baselineskip\noindent
 \textbf{Keywords:} 
 nonparametric statistics,
 conditional density estimation,
 time series,
 machine learning \\ 
 \textbf{MSC:} 00-01; 99-00
\end{abstract}

\section{Introduction}

Predicting future values of 
a time series is often not enough.
For instance, traders can make better option pricing and 
estimate value at risk by 
forecasting the {\em full} uncertainty about 
the future value of an asset, $Y_{t+1}$.  
This evaluation can be made by estimating the 
conditional density, $f(y_{t+1}|y_1,\ldots,y_t,\x_t)$, 
where $\x_t$ are exogenous variables. 
An estimate of $f$ allows one to
construct predictive intervals 
for new observations \citep{Fernandez-Soto,izbicki2020flexible,izbicki2022cd,dey2022calibrated}, 
perform simultaneous quantile regression \citep{Takeuchi}, and 
calculate arbitrary moments of the series, 
such as volatility, skewness and kurtosis 
\citep{filipovic2012conditional,kalda2013nonparametric,GneitingKatzfuss}.
Also, empirical models for asset pricing are incomplete 
unless the full conditional model is specified 
\citep{hansen1994autoregressive}.

In order to obtain conditional density estimators (CDEs),
one often uses parametric models, 
such as ARMA-GARCH \citep{engle1982autoregressive,hamilton2020time}
and generalizations \citep{bollerslev1987conditionally,hansen1994autoregressive,sentana1995quadratic}.
However, the assumptions in these models are
often too restrictive, specially when
the distribution of $Y_{t+1}$ is asymmetric or multimodal.
It can also be hard to capture the relationship between
$Y_{t+1}$ and exogenous variables through a parametric model.
In such cases, nonparametric estimators are useful \citep{hansen1994autoregressive,jeon2012using}. 

The majority of the proposed nonparametric CDEs for time series are
based on kernel smoothers 
\citep{jeon2012using,hardle1997review,hyndman2002nonparametric,fan2008nonlinear}.
This approach performs poorly when
either many lags or exogenous variables do not affect the response,
unless a different bandwidth is used for each covariate \citep{hall2004cross}.
Since selecting a large number of bandwidths is time consuming
\citep{izbicki2016nonparametric,izbicki2017photo}, 
kernel smoothers are effective only in low dimensions \citep{izbicki2014high}. 
Thus, nonparametric CDEs that scale better 
with the number of covariates are needed.

This paper introduces \flexts, 
a version of FlexCode \citep{IzbickiLeeFlexCode} 
tailored for time series.
\flexts \ transforms \emph{any} regression method 
into a time series CDE. Furthermore,
it often inherits the properties of the chosen regression method. 
For instance, \flexts \ based on 
random forests \citep{breiman2001random} yields 
a CDE that perform automatic variable selection and
works well in high dimensional settings. 
Also, \flexts \ based on XGBoost \citep{chen2015xgboost}
scales for large datasets. 

\Cref{sec:method} defines \flexts \ and
describes its implementation.
\Cref{sec:theory} derives 
a general rate of convergence for \flexts \ 
as a function of the rate of convergence of
the chosen regression method.
It also derives a specific rate of convergence
for \flexts \ when 
Nadaraya-Watson regression is used.
\Cref{sec:experiment} compares 
the performance of \flexts \ 
to NNKCDE and GARCH on 
both simulated and real data.


\section{Methodology}
\label{sec:method}

Let the data be $(\u_1,y_1),\ldots,(\u_T,y_T)$, 
where $y_t \in \sY \subseteq \mathbb{R}$ is 
of interest and $\u_t \in \mathbb{R}^d$ contains lagged values of $y_t$ 
as well as available exogenous variables. 
We assume that this series is stationary and, therefore,
the conditional density of interest, 
$f_{y_t|\u_t}(y|\u)$, does not depend on $t$.
Hence, we refer to this density by $f(y|\u)$.

In order to develop \flexts,
the first step is to decompose
$f(y|\u)$ into an orthonormal series.
As long as $\int f^2(y|\u)dy<\infty$, one obtains
for each orthonormal basis over $\mathcal{L}^2(\mathbb{R})$,
$(\phi_{i})_{i \geq 0}$,
the decomposition:
\begin{definition}[Decomposition of $f(y|\u)$]
 \label{def:f_decomp}
 \begin{align*}
 f(y|\u) &= \sum_{i \geq 0}\beta_i(\u)\phi_{i}(y),
 & \text{where }
 \beta_i(\u) = \int f(y|\u)\phi_i(\u)dy
\end{align*}
\end{definition}
Choices for $\phi$ include, 
for instance, the Fourier basis
and wavelets, which are 
more appropriate when $f(y|\u)$ is
spatially inhomogeneous in $y$ \citep{mallat1999wavelet}.
 
Furthermore, if $f$ is smooth 
with respect to $(\phi_{i})_{i \geq 0}$, then 
it is well approximated by the initial terms of the series:
\begin{align}
 \label{eq:cde-1}
 f(y|\u) 
 &\approx f_I(y|\u) 
 := \sum_{i = 0}^I\beta_i(\u)\phi_{i}(y),
 & \text{for some $I \in \mathbb{N}$.}
\end{align}
\flexts \ obtains an estimate for $f(y|\u)$ by 
choosing a value of $I$,
estimating $\beta_1(\u)),\ldots,\beta_I(\u)$, and
plugging in these estimates into \cref{eq:cde-1}.
Hence, \flexts \ breaks CDE
into two tasks: estimating $\beta_i(\u)$ and
choosing $I$.

In order to estimate $\beta_i(\u)$ an
arbitrary regression method can be used.
The key insight is that 
$\beta_i(\u)$ is the projection 
of $f$ over $\phi_i$. Therefore, 
$\beta_i(\u) = \E\left[\phi_i(Y)|\u\right]$.
That is, $\beta_i(\u)$ is the regression of $\phi_i(Y)$ onto $\u$.
Hence, an estimate of $\beta_i(\u)$,
$\hbu$, can be obtained by
applying an arbitrary regression method to $(\u_1,\phi_i(y_1)),\ldots,(\u_T,\phi_i(y_T))$. 

Next, the value of $I$ controls
the bias-variance trade-off and 
is chosen through temporal data splitting
\cref{fig:data_split}.
A \flexts \ estimator, $\hfi$,
is defined for each choice of $I$:
\begin{definition}[\flexts \ CDE]
 \label{def:flexts}
 \begin{align*} 
  \hfi &= 
  \sum_{i= 0}^I \hbu \phi_{i}(y).
 \end{align*}
\end{definition}

\begin{figure}
 \centering
 \includegraphics[trim={2cm 3cm 6.5cm 3cm},clip,width=0.6\textwidth]{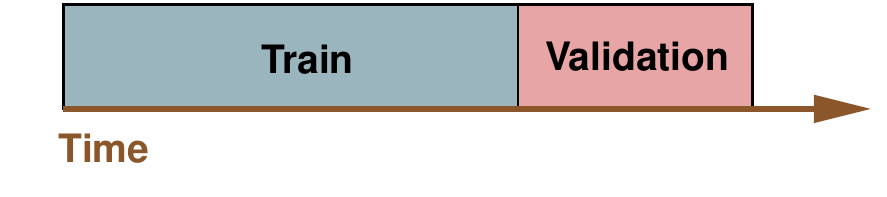}
 \caption{Data-splitting used to tune the cutoff $I$ in FlexCode.} 
 \label{fig:data_split}
\end{figure}

The performance of each
$\hfi$ is evaluated according to
the CDE loss:
\begin{align*}
 L(\hat f_I, f)
 &= \int \left(\hfi -f(y|\u)\right)^2 
 dy d\P(\u).
\end{align*} 

Although the CDE loss depends on
the unknown CDE, $f$,
it can be estimated up to
a constant. First,
the CDE loss admits 
the following simplification:
\begin{align}
 \label{eq:cde-2}
 L(\hat f_I, f)
 &= \int \E[\hat f_I^2(y|\U)]dy -2 \E[\hat f_I(Y|\U)] + K
\end{align}
Using the above equation, 
the CDE loss can be estimated 
up to a constant.
The training set estimates $\beta$ and
the validation set evaluates 
the expression in \cref{eq:cde-2}
by substituting expectations by empirical averages:
\begin{align}
 \label{eq:cde-3}
 \hat{L}(\hat f_I, f)
 &= \frac{\sum_{i=T+1}^{T^*}
 {\int \hat f_I^2(y|\U_i)}dy}{T^*-T} 
 -2\frac{\sum_{i=T+1}^{T^*}
 {\hat f_I(Y_i|\U_i)}}{T^*-T}
\end{align}
Finally, the value of $I$ is chosen by
minimizing $\hat{L}(\hat f_I, f)$.

One of the main advantages of 
\flexts \ is that 
the choice of the regression method can 
follow the structure of the problem at hand. 
For instance, if there is a sparse relationship between
$Y_t$ and $\U_t$, 
XGBoost and random forests 
\citep{chen2015xgboost}, perform 
variable selection.
If $\U_t$ belongs to a low-dimensional submanifold, then 
kNN and spectral series 
\citep{LeeIzbickiReg} achieve optimal rates.

Next, we provide theoretical results
regarding the rate of convergence of
the proposed CDE.

\section{Theory}
\label{sec:theory}

This Section discusses the theoretical properties of \flexts.
Two main results are provided. 
First, under mild conditions,
we determine the rate of convergence
of \flexts \ as a function of
the rate of convergence of 
the estimates of the
regression functions, $\beta$.
As a result, if 
the regression method that is used for
estimating $\beta$ is consistent,
then \flexts \ is consistent.
Second, under additional weak assumptions,
if $\beta$ is estimated using
the Nadaraya-Watson estimator, then
the rate of convergence of 
\flexts \ is obtained. 

For all of the above results,
data is assumed to be stationary:

\begin{assumption}[Stationarity]
 \label{ass:stationary}
 $\{(\U_i,Y_i):i \in \mathbb{N}\}$ is stationary.
\end{assumption}

Also, the target $f(y|\u)$ is
assumed to be smooth over $y$ 
for every $\u$.
Specifically, we assume that
the $s$-th weak derivative of
$f(\cdot|\u)$ is 
uniformly bounded over $\u$.
This concept is formalized by
defining that the functions 
$f(\cdot|\u)$ belong to
similarly smooth Sobolev spaces.
\begin{definition}[\citet{wasserman}]
 For $s>\frac{1}{2}$ and $c > 0$,
 the Sobolev space, 
 $W_{\phi}(s,c)$ is
 \begin{align*}
  W_{\phi}(s,c)
  &= \left\{f \equiv 
  \sum_{i = 1}^{\infty} 
  \beta_i \phi_i :  
  \sum_{i = 1}^{\infty} 
  (\pi i)^{2s} \beta^2_i \leq c^2 
  \right\}
 \end{align*}
\end{definition}
\begin{assumption}[Smoothness in 
 the $y$ direction]
 \label{ass:sobolevZ} 
 $f(\cdot|\u) \in 
 W_{\phi}(s(\u),c(\u))$, 
 where $\inf_\u s(\u) 
 := \gamma > \frac{1}{2}$ and
 $\E[c^2(\U)] := C < \infty$.
\end{assumption}

The next subsection discusses 
the rate of convergence of \flexts \
as a function of 
the rate of convergence of
the estimates of $\beta$.

\subsection{General rate of convergence of \flexts}

In order to study the rate of convergence of
\flexts \ this section assumes that
the functions $\beta_i(\u)$ are 
estimated using regressors,
$\hbu$, which converge uniformly
at a rate $O_P(T^{-r})$.
\begin{assumption} [Regression convergence]
 \label{ass:regression}
 There exists $r > 0$ such that
 \begin{align*}
  \sup_i \int \left(\hbu - \bu \right)^2d\P(\u)
  = O_P(T^{-2r})
 \end{align*}
\end{assumption}

Using \cref{ass:regression}, 
it is possible to obtain a rate of convergence for
\flexts \ under the CDE loss:
\begin{theorem}
 \label{thm::main}
 Under \Cref{ass:stationary,ass:sobolevZ,ass:regression}, \flexts \ 
 (\cref{def:flexts}) satisfies
 the following bound on 
 the CDE loss,
 \begin{align*}
  \iint\left(\widehat{f}_I(y|\u)
  -f(y|\u)\right)^2dyd\P(\u) 
  &\leq IO_P\left(T^{-2r}\right)
  +O\left(I^{-2\gamma}\right).
 \end{align*}
 Hence, by taking the optimal rate 
 $I \sim T^{\frac{2r}{2\gamma+1}},$
 \begin{align*}
  \iint\left(\widehat{f}_I(y|\u)
  -f(y|\u)\right)^2dyd\P(\u)
  &\leq O_P\left(T^{-\frac{4\gamma r}
  {2\gamma+1}}\right).
 \end{align*}
\end{theorem}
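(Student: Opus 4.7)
The plan is to apply a standard bias--variance decomposition that leverages the orthonormality of $(\phi_i)$ and then to control each piece using one of the two remaining assumptions. First, combining \Cref{def:f_decomp} with \Cref{def:flexts}, I would write
\begin{align*}
 \hfi - f(y|\u)
 &= \sum_{i=0}^{I}\bigl(\hbu - \bu\bigr)\phi_i(y)
    - \sum_{i > I} \bu\,\phi_i(y),
\end{align*}
so that Parseval's identity yields, for each fixed $\u$,
\begin{align*}
 \int\bigl(\hfi - f(y|\u)\bigr)^2 dy
 &= \sum_{i=0}^{I}\bigl(\hbu - \bu\bigr)^2 + \sum_{i > I}\bu^2.
\end{align*}
Applying Tonelli and integrating against $\P(\u)$ then splits the CDE loss into a variance-like term $V_I$ and a bias-like term $B_I$.

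Next, I would bound $V_I$ and $B_I$ separately. For $V_I = \sum_{i=0}^{I}\int(\hbu - \bu)^2 d\P(\u)$, \Cref{ass:regression} bounds each summand uniformly in $i$ by the \emph{same} $O_P(T^{-2r})$ term, so $V_I = I\,O_P(T^{-2r})$. For $B_I$, the Sobolev smoothness in \Cref{ass:sobolevZ} gives, pointwise in $\u$,
\begin{align*}
 \sum_{i > I}\bu^2
 &\leq \frac{1}{(\pi I)^{2s(\u)}}\sum_{i > I}(\pi i)^{2s(\u)}\bu^2
  \leq \frac{c^2(\u)}{(\pi I)^{2s(\u)}}
  \leq \frac{c^2(\u)}{(\pi I)^{2\gamma}},
\end{align*}
where the final inequality uses $s(\u)\geq \gamma$. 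Integrating in $\u$ and invoking $\E[c^2(\U)] = C < \infty$ yields $B_I = O(I^{-2\gamma})$. Adding $V_I$ and $B_I$ gives the first display of the theorem.

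Finally, I would optimize $I$ by balancing the two terms, $I\,T^{-2r}\asymp I^{-2\gamma}$, which produces the prescribed $I \sim T^{2r/(2\gamma+1)}$ and the common value $T^{-4\gamma r/(2\gamma+1)}$ for both summands, establishing the second display.

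The proof is mostly bookkeeping; the only delicate point is making sure the supremum-in-$i$ formulation of \Cref{ass:regression} genuinely absorbs the $I+1$ summands into a single $I\,O_P(T^{-2r})$ without an extra factor. Because the bound holds uniformly in $i$ by hypothesis, no union bound or maximal inequality is required. \Cref{ass:stationary} enters only implicitly, by guaranteeing that $f(y|\u)$, and hence the regression targets $\bu = \E[\phi_i(Y)\mid\U=\u]$, are well-defined without a time index.
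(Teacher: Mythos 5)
Your proof is correct and follows essentially the same route as the paper: decompose into a variance term $\sum_{i\le I}(\hbu-\bu)^2$ controlled by \Cref{ass:regression} and a tail-bias term $\sum_{i>I}\bu^2$ controlled by the Sobolev condition, then balance $I$. One small improvement worth noting: by invoking Parseval and orthogonality you obtain an exact equality in the bias--variance split, whereas the paper writes the intermediate step as $\|a+b\|_2^2\le\|a\|_2^2+\|b\|_2^2$, which is not a general inequality and is only valid here because the two pieces live in orthogonal subspaces (spanned by $\phi_0,\dots,\phi_I$ and by $\phi_{I+1},\phi_{I+2},\dots$ respectively) --- your presentation makes that crucial orthogonality explicit rather than implicit.
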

The rate of convergence in 
\cref{thm::main} depends on
the rate convergence of $\hbu$, 
as in \cref{ass:regression}.
The rate in \cref{ass:regression} is 
typically related to 
the number of covariates and to 
smoothness assumptions regarding $\beta$.
\Cref{sec:NW} shows that,
under suitable conditions,
the Nadaraya-Watson regression estimator
satisfies \cref{ass:regression} and, therefore,
\cref{thm::main} can be applied.

\subsection{Rate of convergence of \flexts \ using Nadaraya-Watson regression}
\label{sec:NW}

This section studies the convergence of
\flexts \ when $\beta_i$ is estimated using
a Nadaraya-Watson regression with a uniform kernel:
\begin{definition}[Nadaraya-Watson regression with uniform kernel]
 \label{def:nw}
 \begin{align*}
   \widehat{\beta}_i (\u)
   &= \frac{\sum_{j=1}^T 
   \I(\|\U_j - \u\|_2 \leq \delta_T) \phi_i(Y_j)}
   {\sum_{j=1}^T \I(\|\U_j - \u\|_2 \leq \delta_T)},
   & \text{for some choice of } \delta_T.
 \end{align*}
\end{definition}

The rate of convergence for $\hbu$ is
obtained following closely the results in
\cite{Truong1992}. In order to apply these results,
additional assumptions are required.
First, in order to obtain that
every $\bu$ is smooth over $\u$,
\cref{ass:smoothness} ensures that
$f(y|\u)$ is Lipschitz continuous:
\begin{assumption}[Smoothness in  $\u$ direction]
 \label{ass:smoothness}
 $(Y,U)$ is bounded and there exists 
 $M_0 >0$ such that,
 for every $y \in \sY$,
 \begin{align*}
  |f(y|\u_1)-f(y|\u_2)|\leq M_0 ||\u_1-\u_2||.
 \end{align*}
\end{assumption}
Similarly, the value of $\bu$ is bounded.
Since $\bu = \E[\phi_i(Y)|\U]$,
this restriction is obtained by
bounding the image of $\phi_i$.
This assumption is satisfied, for instance,
by the cosine and the trigonometric basis.
\begin{assumption}
 \label{ass:bounded_basis}
 There exists $M > 0$ such that
 $\sup_{i \in \mathbb{N}, y \in \sY}
 |\phi_i(y)| < M$.
\end{assumption}
Also, the joint distribution of
$\U_0,\U_1,\ldots,\U_T$ does not have
extremal high probability or
low probability regions:
\begin{assumption}[Bounded density]
 \label{ass:densityU}
 \label{ass:conditionalDensityU}
 The distribution of $\U_0$ is 
 continuous and its density is 
 bounded away from zero and infinity.
 Also, the conditional distribution of
 $\U_j$ given $\U_0$ is 
 bounded away from  zero and infinity.
\end{assumption}
Finally, $(\U_i,Y_i)$ is assumed to
have a short memory. Such a constraint is
obtained by assuming the series to be
$\alpha$-mixing:
\begin{assumption}[Strong $\alpha$-mixing]
 \label{ass:alphaMixing}
 Let $\mathcal{F}_0$ and 
 $\mathcal{F}^{k}$ denote the 
 $\sigma$-fields generated by 
 $\{(\U_i,Y_i):i\leq 0\}$ and 
 $\{(\U_i,Y_i):i\geq k\}$, respectively.
 Also, let
 \begin{align*}
  \alpha(k) &=
  \sup_{A \in \sF_0, B \in \sF^k} 
  \left|\P(A \cap B)-\P(A)\P(B)\right|.
 \end{align*}
 There exists $\rho \in (0,1)$ such that
 $\alpha(k) = O\left(\rho^k\right)$.
\end{assumption}

Under the above assumptions,
it is possible to obtain
the rate of convergence of \flexts \ 
with Nadaraya-Watson regression:

\begin{theorem}
 \label{thm:NW}
 If $\hbu$ is such as in \cref{def:nw} and
 satisfies $\delta_T \sim T^{-\frac{1}{2+d}}$,
 and $\hfi$ is such as in \cref{def:flexts} with
 $I \sim T^{\frac{2}{(2\gamma+1)(2+d)}}$, then  
 under \cref{ass:stationary,ass:sobolevZ,ass:smoothness,ass:bounded_basis,ass:densityU,ass:alphaMixing},
 \begin{align*}
  \iint \left(\hfi - f(y|\u)\right)^2 dy d\P(\u)  
  &= O_P\left(T^\frac{-2\gamma}
  {2\gamma+1+d \left(\gamma+\frac{1}{2}\right)} \right).
 \end{align*}
\end{theorem}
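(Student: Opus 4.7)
The plan is to reduce Theorem \ref{thm:NW} to Theorem \ref{thm::main} by showing that the Nadaraya-Watson estimator of Definition \ref{def:nw} satisfies Assumption \ref{ass:regression} with rate $r=1/(2+d)$. Plugging this value into the general bound $T^{-4\gamma r/(2\gamma+1)}$ of Theorem \ref{thm::main}, and noting that $(2\gamma+1)(2+d)/2 = 2\gamma+1+d(\gamma+1/2)$, yields exactly the stated exponent $2\gamma/(2\gamma+1+d(\gamma+1/2))$. The tuning $I \sim T^{2r/(2\gamma+1)}$ from Theorem \ref{thm::main} likewise becomes $I \sim T^{2/((2\gamma+1)(2+d))}$, matching the hypothesis of the theorem. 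So the only substantive work is verifying Assumption \ref{ass:regression} for the NW estimator with $r=1/(2+d)$.

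To do so I would decompose $\hbu - \bu$ into a bias term, $\E[\hbu\mid\U_1,\dots,\U_T]-\bu$, and a stochastic term. For the bias, Assumption \ref{ass:smoothness} together with the uniform bound $|\phi_i|\leq M$ (Assumption \ref{ass:bounded_basis}) gives that $\bu = \E[\phi_i(Y)\mid\u]$ is Lipschitz in $\u$ with a constant independent of $i$, since
\begin{align*}
|\beta_i(\u_1)-\beta_i(\u_2)| \leq M \int |f(y|\u_1)-f(y|\u_2)|\,dy \leq MM_0|\sY|\,\|\u_1-\u_2\|.
\end{align*}
Hence, on the event that $B(\u,\delta_T)$ contains at least one observation, the conditional bias is at most $MM_0|\sY|\delta_T$, uniformly in $i$ and $\u$. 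Assumption \ref{ass:densityU} combined with stationarity (Assumption \ref{ass:stationary}) ensures that the expected count in the ball is of order $T\delta_T^d$, and the $\alpha$-mixing assumption (Assumption \ref{ass:alphaMixing}) lets this count concentrate around its mean.

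For the stochastic term I would invoke \cite{Truong1992}, who treats precisely Nadaraya-Watson estimators for $\alpha$-mixing series under exponential mixing and bounded regressands. Their results give $\int (\hbu - \E[\hbu\mid\U_1,\dots,\U_T])^2\,d\P(\u) = O_P((T\delta_T^d)^{-1})$. The crucial observation for uniformity in $i$ is that the regressand here is $\phi_i(Y_j)$, which is bounded by $M$ uniformly in $i$ thanks to Assumption \ref{ass:bounded_basis}; therefore the Davydov-type covariance inequalities used in that argument produce constants that do not depend on $i$. Combining the squared bias $O(\delta_T^2)$ with the variance $O((T\delta_T^d)^{-1})$ and choosing $\delta_T \sim T^{-1/(2+d)}$ gives the uniform-in-$i$ rate $O_P(T^{-2/(2+d)})$, which is Assumption \ref{ass:regression} with $r=1/(2+d)$.

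The main obstacle I expect is the ratio structure of $\hbu$ combined with the dependence: the denominator is itself a random sum that must be bounded below by a constant multiple of $T\delta_T^d$ with high probability, uniformly over $\u$ on a suitable discretization, before passing to the supremum by Lipschitz continuity of the ball indicator. The second delicate point is the supremum over $i \in \mathbb{N}$ in Assumption \ref{ass:regression}: one must ensure that every constant produced by the mixing covariance bounds depends on the $\phi_i$'s only through the uniform bound $M$, so that the resulting $O_P(T^{-2r})$ holds uniformly rather than pointwise in $i$. Both points are addressed by the combination of Assumptions \ref{ass:bounded_basis}, \ref{ass:densityU}, and \ref{ass:alphaMixing}, and can be handled via the techniques of \cite{Truong1992}.
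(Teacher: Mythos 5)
Your proposal is correct and follows essentially the same route as the paper: reduce to \cref{thm::main} by verifying \cref{ass:regression} for the Nadaraya--Watson estimator with $r = 1/(2+d)$, splitting $\hbu - \bu$ into a bias term controlled by the Lipschitz property of $\beta_i$ (inherited from \cref{ass:smoothness,ass:bounded_basis}) and a stochastic term handled via the mixing and covariance machinery of \cite{Truong1992}, with \cref{ass:bounded_basis} guaranteeing that all constants are uniform in $i$. The exponent algebra you give matches the paper's final computation, and the two technical obstacles you flag (lower-bounding the random denominator $\sum_j K_j$ uniformly over $\u$, and uniformity of the mixing constants over $i$) are precisely the points addressed by the paper's Lemmas on $\V\bigl[\sum_j K_j W_{i,j}\bigr]$ and the Markov-inequality step invoking Lemma 7 of \cite{Truong1992}.
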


The same proof strategy used in \cref{thm:NW} 
can be used to derive specific rates 
for other regression estimators for
$\bu$ when the convergence rate is known. 
For example, this is the case for 
the neural networks-based method 
explored by \cite{chen1999improved}.

The next section evaluates 
the empirical performance of 
\flexts \ in real and simulated datasets.

\section{Experiments}
\label{sec:experiment}

This section compares 
\flexts \ to two other CDEs:
NNKCDE \citep{pospisil2019tpospisi,dalmasso2020conditional,izbicki2020nnkcde}
and GARCH \citep{bollerslev1986generalized}.
NNKCDE takes the $k$-nearest neighbors of 
an evaluation point, $\u$, and uses
these points to estimate
$f(y|\u)$ through a kernel density estimator.
GARCH assumes that $Y_{t}$ follows 
an ARMA model and that the residuals
follow a SGARCH model.

While \cref{sec:simulated} compares 
these methods using simulated data, 
\cref{sec:real_data} compares them
using real data regarding
criptocurrency prices and also
the Individual Household Electric 
Power Consumption data from UCI.

\subsection{Simulated Data}
\label{sec:simulated}

This section compares 
\flexts, NNKCDE and GARCH using
simulated data. We generate data using 
$3$ sample sizes: $n = 1000$, $2500$ and $5000$.
The following $6$ simulation scenarios are studied:

\begin{enumerate}
 \item \textbf{AR}: 
 $Y_t = 0.2 Y_{t-1}+ 0.3 Y_{t-2}
 +0.35 Y_{t-3} +\epsilon_t$,
 where $\epsilon_t \sim N(0,1)$.
 
 \item \textbf{ARMA JUMP}:
 $Y_t =  0.1Y_{t-1} + 0.4Y_{t-2} 
 + 0.4Y_{t-3} +0.01
 -0.3Z_t + 0.05(1+Z_t)\epsilon_t$,
 where $\epsilon_t \sim N(0,1)$ and
 $Z_t \sim \text{Bernoulli}(0.05)$.
 
 \item \textbf{ARMA JUMP T}:
 $Y_t =  0.1Y_{t-1} + 0.4Y_{t-2} + 0.4Y_{t-3}
 +0.01-0.3Z_t + 0.05(1+Z_t)\epsilon_t$,
 where $\epsilon_t \sim t_3$ and
 $z_t \sim \text{Bernoulli}(0.05)$,
 where $t_3$ denotes Student's t-distribution with
 $3$ degrees of freedom.
 
 \item \textbf{NONLINEAR MEAN}: 
 $Y_t = \sin^2\left(\pi Y_{t-3}\right)+\epsilon_t$, where $\epsilon_t\sim N(0,\sigma^2)$.
 
 \item \textbf{NONLINEAR VARIANCE}: 
 $Y_t \sim N(0,\sigma_t^2)$, 
 where $\sigma_t=0.1$ if $|Y_{t-3}|>0.5$ and
 $\sigma_t=1$, otherwise.
 
 \item \textbf{JUMP DIFFUSION}: 
 The model proposed by \cite{Christoffersen2016},
 ``Time-varying Crash Risk: The Role of Market Liquidity''.
\end{enumerate}

The code for the above simulations 
as well as for training and 
evaluating the CDEs was implemented in
the R language.
The estimators were evaluated according to
the CDE loss (\cref{eq:cde-3}) and
the quantile loss \citep{koenker2001quantile}.

As a first comparison, 
each CDE method is evaluated according to
estimated CDE loss 
with 95\% confidence intervals.
In this scenario, for all the methods, 
the CDEs were fit using 
the correct number of lags.
For instance, in the 
\textbf{NONLINEAR MEAN} model, 
$3$ lags were used.
\Cref{fig:cde_loss} shows that
\flexts \ significantly outperforms 
GARCH and NNKCDE in all simulations except for 
\textbf{AR} and \textbf{JUMP DIFFUSION}.
In \textbf{JUMP DIFFUSION}, 
\flexts \ and GARCH have similar losses and
outperform NNKCDE.
In \textbf{AR}, GARCH slightly outperforms 
\flexts \ and both of them outperform NNKCDE.
The high performance of GARCH in 
\textbf{AR} is expected, since 
it is the only scenario in which 
the GARCH parametric model is correctly specified.
\begin{figure}
 \centering
 \includegraphics[width=0.8\textwidth]{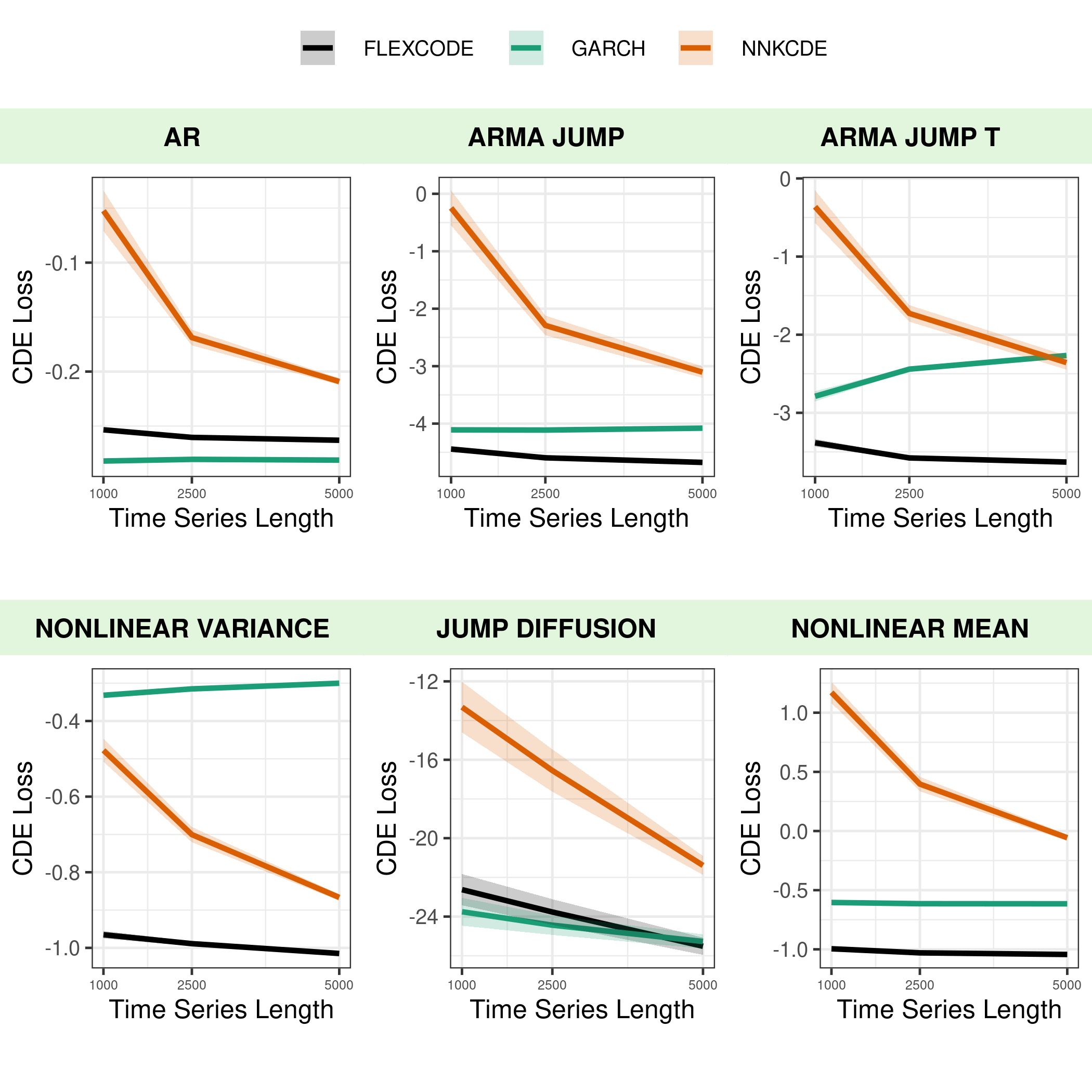}
 \caption{Estimated CDE loss and 
 95\% confidence intervals for 
 \flexts, GARCH and NNKCDE under 
 6 simulation scenarios.
 \flexts \ outperforms GARCH and NNKCDE 
 in all scenarios except for
 \textbf{AR} and \textbf{JUMP DIFFUSION}.} 
 \label{fig:cde_loss}
\end{figure}

The analysis in \cref{fig:cde_loss} can be
complemented by evaluating whether 
the performance of each method is influenced
by the incorrect specification of
the number of lags. 
\Cref{fig::lags} studies this influence in
the \textbf{AR} scenario.
While the performance of GARCH and NNKDCDE is
heavily affected by the chosen number of lags,
the performance of \flexts \ is robust and
does not degrade as the number of lags increases.
As a result, although GARCH outperforms
\flexts \ when the correct 
number of lags (3) is chosen,
\flexts \ as the number of 
adjusted lags increases.

\begin{figure}
 \centering
 \includegraphics[scale=0.3]{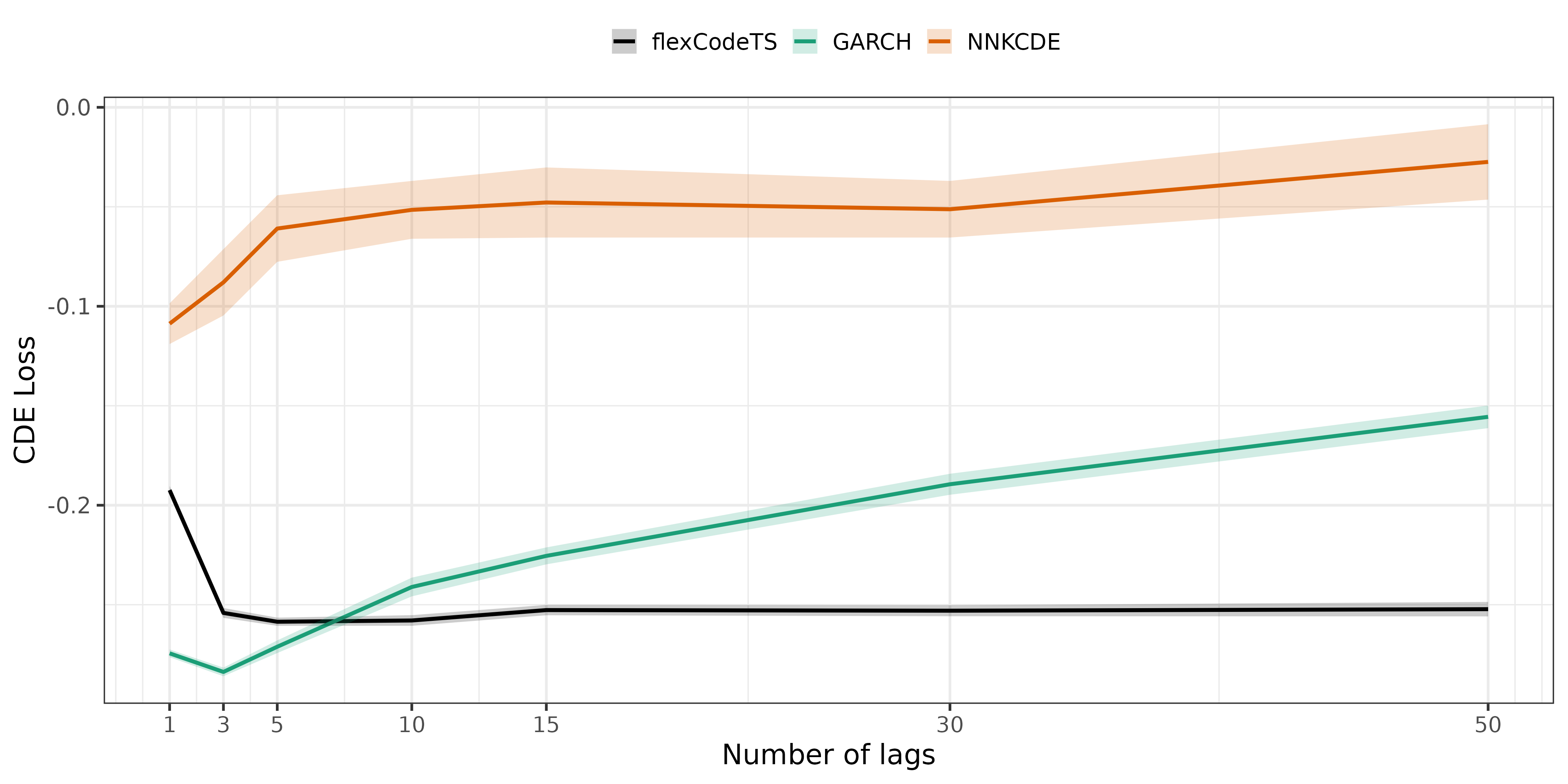}
 \caption{CDE loss for each method as 
 a function of the number of lags that 
 is used. The performance is assessed in
 the \textbf{AR} scenario, where
 the parametric model in GARCH is correct.  
 While GARCH outperforms \flexts \ when 
 the chosen number of lags is correct ($3$),
 \flexts \ outperforms GARCH as
 the number of lags included in the model increases.}
 \label{fig::lags}
\end{figure}

\Cref{fig::importance} explains why
the performance of \flexts \ in 
\textbf{AR} is robust over 
the number of adjusted lags.
We define the importance of each lag using
\flexts \ with Random Forest regression as
average Random Forest importance \citep{breiman2001random} of the lag across 
all regression functions, $\hbu$.
\Cref{fig::importance} shows that,
even when $50$ lags are incorporated
into the model, only 
the first $3$ lags have a high importance.
These lags are precisely the ones that 
are used in \textbf{AR}.
This evidence suggests that \flexts \ 
automatically selects the relevant lags,
which explains why its performance is
robust to the number of adjusted lags.

\begin{figure}
 \centering
 \includegraphics[scale=0.33]{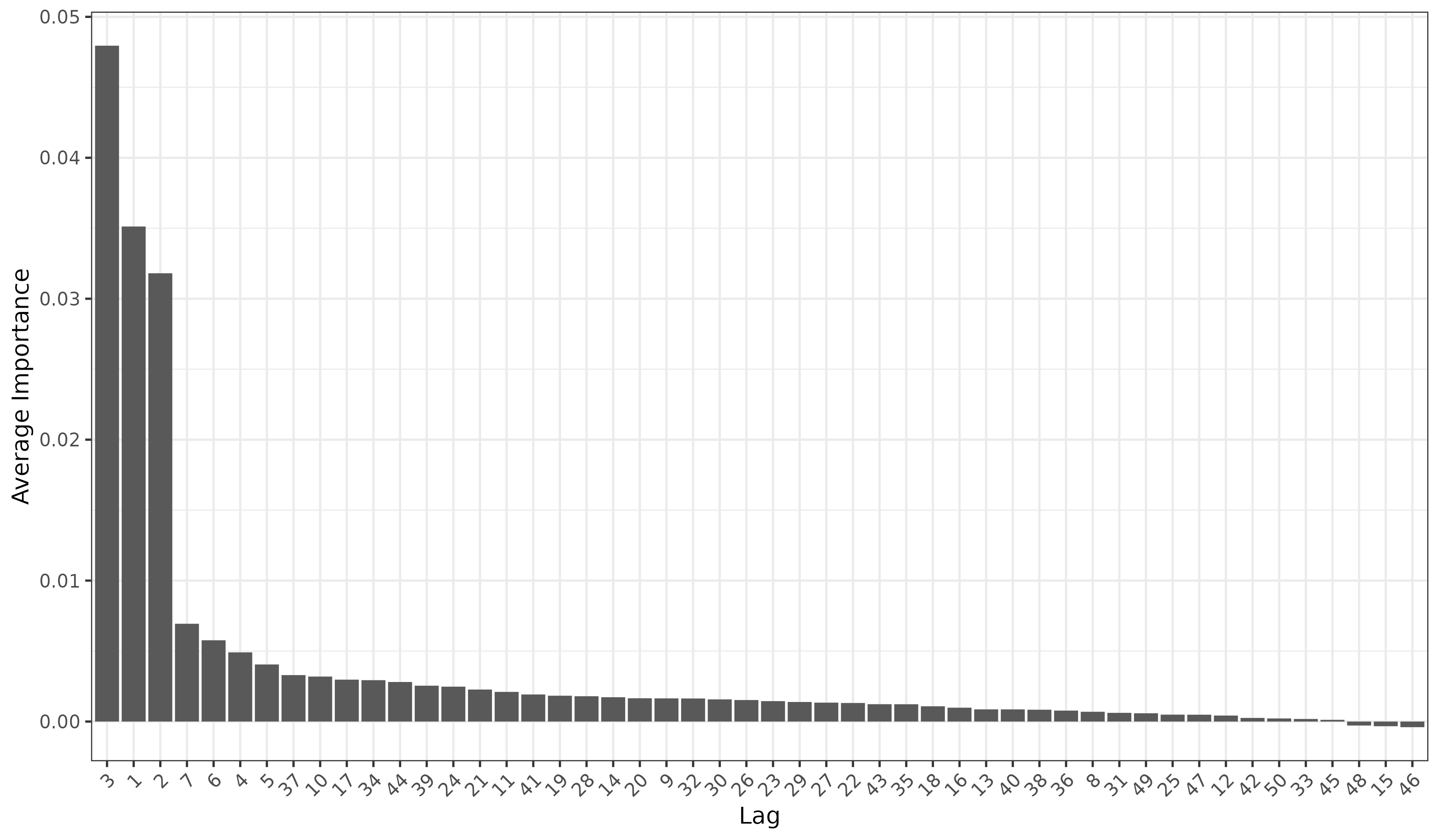}
 \caption{Average importance of each 
 adjusted lag in the \textbf{AR} scenario
 according to \flexts \ with
 Random Forest regression. 
 \flexts \ specifies a high importance to
 the first three lags, which
are the only ones used in this scenario.}
\label{fig::importance}
\end{figure}

\subsection{Applications to real data}
\label{sec:real_data}

This section compares the performance of
\flexts, NNKCDE and GARCH in
applications to real data:
Individual Household Electric Power Consumption from UCI\footnote{Available at \url{http://archive.ics.uci.edu/ml/datasets/Individual+household+electric+power+consumption}.} and
criptocurrency prices.

\subsubsection{Individual Household 
Electric Power Consumption (IHEPC) data}

The IHEPC data contains roughly 
$2 \cdot 10^6$ measurements gathered in 
a house located in France during 47 months.
Although \flexts \ with XGBoost regression 
runs using the full data,
NNKCDE and GARCH experience convergence issues.
In order to be able to compare all methods,
we restrict the analysis to the first
$4 \cdot 10^4$ measurements.
Among these measurements, 
70\% are used for training,
10\% for validation, and
the remaining 20\% for testing the models.
We treat ``global active power'' as 
the response variable and use as predictors 
the ten previous lags and also 
all the other covariates measured at the time,
including dummies for the hour of the day.

\begin{figure}
 \centering
 \includegraphics[scale=0.4]{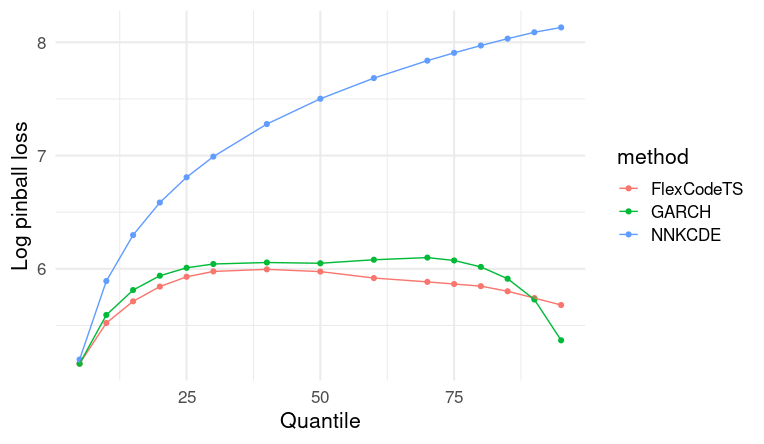}
 \caption{Log pinball loss for GARCH, NNKCDE and
 \flexts \ for each quantile from 5\% to 95\%.
 \flexts \ outperforms NNKCDE and GARCH for
 every quantile except for 95\%.}
 \label{fig::ihepc_pb}
\end{figure}

\flexts \ had a high performance
in the IHEPC data. For instance,
\flexts \ had a CDE loss of -4.4,
lower than those of GARCH (-3.9)
and of NNKCDE (-3.2).
Similarly, \cref{fig::ihepc_pb} presents
each method's log pinball loss for 
each quantile from 5\% to 95\%.
\flexts \ has the lowest pinball loss for
every quantile, except for 95\%.

\Cref{fig::importance_ihepc} shows 
the importance of each variable in \flexts.  
The most relevant variable is 
the global reactive power.
While the global active power measures
the electrical consumption by appliances,
the global reactive power measures
how much electricity bounces back and forth
without any usage.
The hour variables also have a high importance, 
which shows that the predictions
for electrical usage also
depend on the time of the day.

\begin{figure}
 \centering
 \includegraphics[trim=
 {0 0.4cm 0 0},clip,scale=0.35]
 {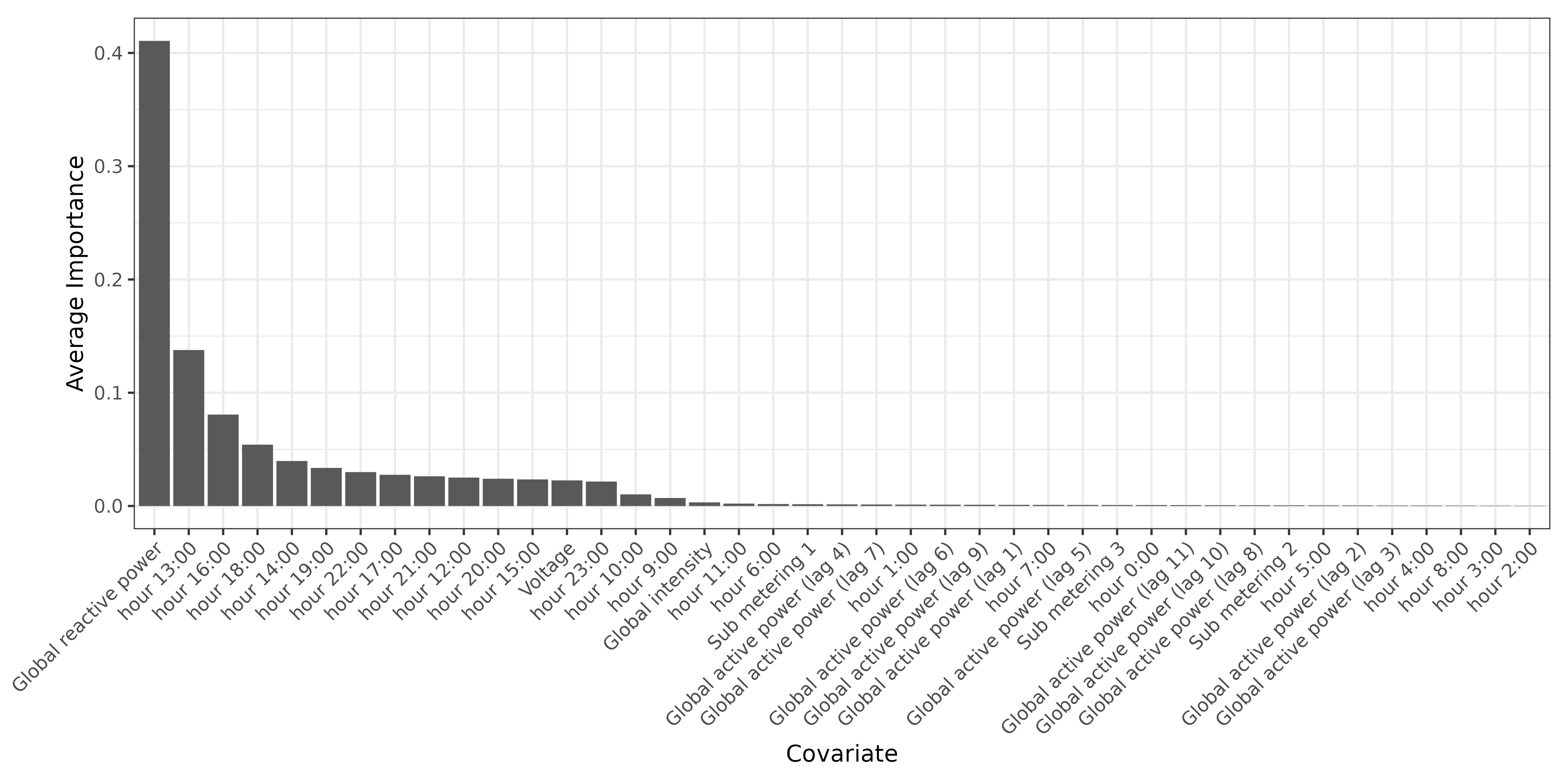}
 \caption{Average importance of 
 predictors according to \flexts \ with 
 XGBoost regression in the IHEPC data.}
 \label{fig::importance_ihepc}
\end{figure}

\subsubsection{Currency data}

This section studies 
conditional density estimation in currency data.
The currency data is composed of
daily measurements from 11/2017 to 
11/2022 of the S\&P500 index,
the exchange rate from euro to dollar, and
closing prices for Ether and Bitcoin.
Due to their high volatility,
cryptocurrency prices provide
a useful complement to the previous analysis.

Using this data, we compare 
the performance of \flexts \ to 
that of GARCH. NNKCDE failed to run with
the available RAM and, thus, was not included.
Data from 2017 to 2020 was used for training,
from 2020 to 2021 for validation, and
from 2021 to 2022 for testing.
In all comparisons, the response variable is
one of the $4$ chosen series.
The following variables were added as predictors:
5 lags of daily return, daily amplitude,
variance of prices in the last $3$ days,
minimum and maximum returns in the last $5$ days.
\flexts \ was trained with LASSO regression.

\begin{figure}
 \centering
 \includegraphics[scale=0.35]
 {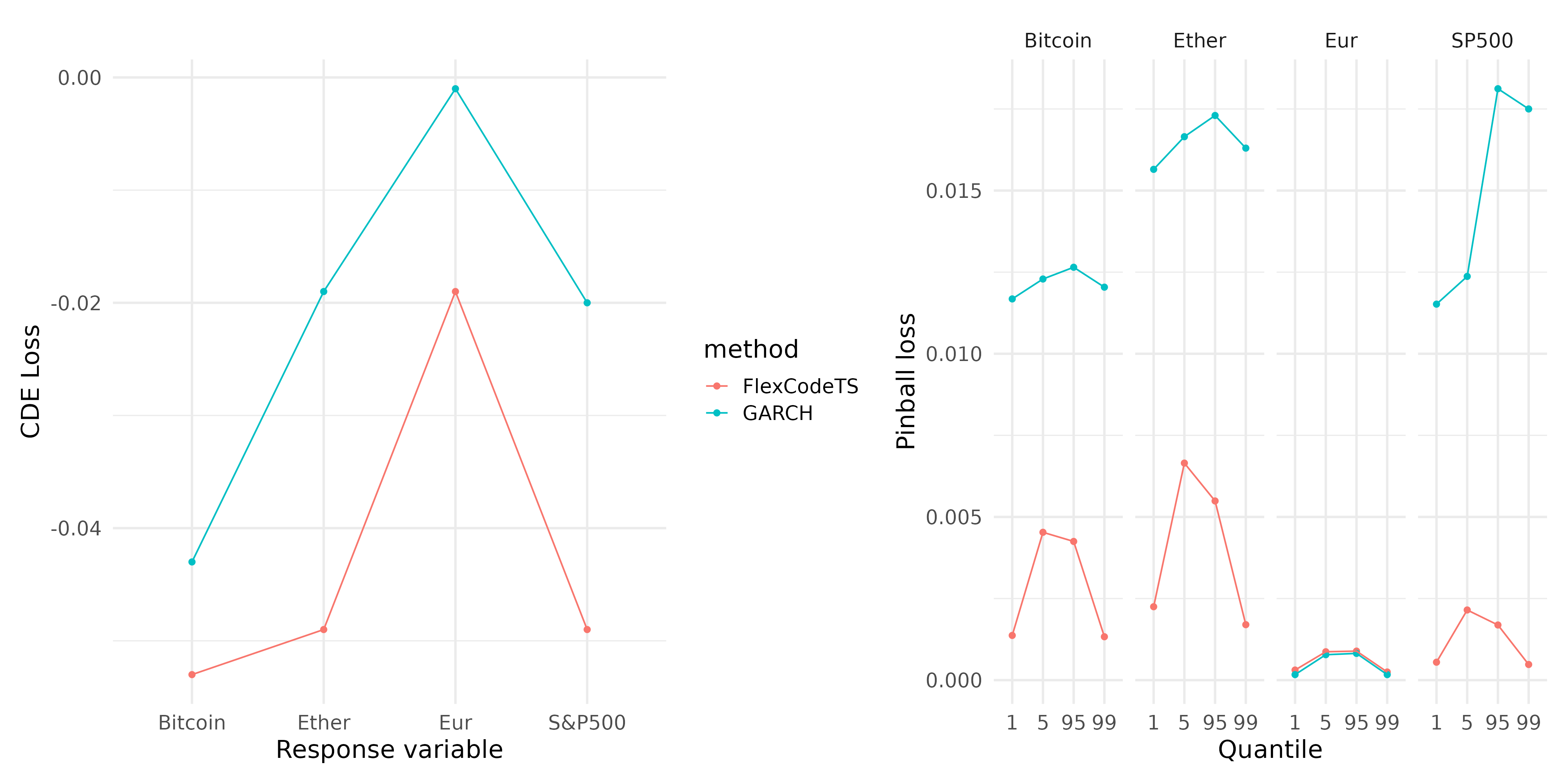}
 \caption{CDE (left) and pinball (right) 
 losses for FlexTS and GARCH when 
 applied to $4$ time series.
 For the CDE loss, FlexTS outperforms 
 GARCH in all of the series.
 For the pinball loss, FlexTS outperforms
 GARCH in all quantiles for the 
 Bitcoin, Ether, and S\&P500 series. 
 For the Eur series, 
 FlexTS and GARCH have 
 similarly small pinball losses 
 in all quantiles.}
 \label{fig::crypto_loss}
\end{figure}

\Cref{fig::crypto_loss} shows that
\flexts \ generally outperforms GARCH 
for all of the response variables.
The graph on the left side presents
the CDE loss for each method.
For all of the $4$ series, 
the CDE loss is higher for
GARCH than for FlexTS.
The graph on the right side compares
the pinball loss of each method for
quantiles $1\%$, $5\%$, $95\%$, and $99\%$.
For the Bitcoin, Ether and S\&P500 series,
the pinball loss for GARCH is
Higher than that for FlexTS in all quantiles.
The only exception to this thread occurs
in the Eur series, in which 
both FlexTS and GARCH have similarly
small pinball losses.

\section{Conclusions}

This paper introduces \flexts, 
a new time series
conditional density estimator.
\flexts \ is a flexible nonparametric CDE,
which can inherit the properties of
arbitrary regression methods. For instance,
in this paper \flexts \ was implemented 
based on penalized linear regression, 
random forests, and XGBoost.
As a future improvement, \flexts \ 
could also be implemented based on
machine learning methods for time series,
such as long short-term memory models
\citep{hochreiter1997long},
attention-based networks 
\citep{vaswani2017attention}, and
gated recurrent units networks
\citep{cho2014properties}.

\flexts \ also inherits 
useful properties from 
the chosen regression method. 
For instance, predictor importance scores
are obtained with
XGBoost and Random Forest.
Moreover, from a theoretical perspective,
we show that \flexts \ inherits
the rate of convergence of 
the chosen regression method.
Hence, \flexts \ can adapt its
convergence by employing 
the regression method that 
best fits the structure of data.

Empirical studies with 
simulated and real data show that
\flexts \ performs well, 
when compared to other CDE methods
for time series, such as
NNKCDE and GARCH.
\flexts \ generally has 
the best performance amongst
the compared methods 
when either the CDE or 
the pinball loss are considered.


\section*{Acknowledgments}

We thank Marcelo Fernandes for
early feedback and discussions on this work.
This work was supported by the Silicon Valley Community Foundation (SVCF; grant \#2018-188547).
Rafael Izbicki is grateful for the financial support of FAPESP (grant 2019/11321-9) and CNPq (grants 309607/2020-5 and 422705/2021-7).
Rafael Stern is grateful for the financial support of FAPESP Research, Innovation and Dissemination Center for Neuromathematics (grant
\#2013/07699-0).

\bibliography{mybibfile}

\appendix 

\section{Proofs}

\subsection{Proof of \cref{thm::main}}

\begin{notation}
 Let $\|g\|_2^2 = \int g(y)^2 dy$ and
 $f_I(y|\u) = \sum_{i=0}^{I} \beta_{i}(\u)\phi_i(y)$. 
\end{notation}

\begin{lemma}[Approximation error] 
 \label{lemma:approximation}
 Under \Cref{ass:stationary,ass:sobolevZ},
 \begin{align*}
  \int \|f(y|\u) - f_I(y|\u)\|_2^2 d\P(\u)
  &= \sum_{i>I} \int \beta_i^2(\u) d\P(\u)
  = O\left(I^{-2\gamma}\right).
 \end{align*}
\end{lemma}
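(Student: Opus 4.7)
The plan is to decompose the argument into two parts, mirroring the two equalities in the statement. The first equality is a direct consequence of Parseval's identity. For each fixed $\u$, both $f(\cdot\mid\u)$ and $f_I(\cdot\mid\u)$ lie in $\mathcal{L}^2(\mathbb{R})$, with coefficients $\beta_i(\u)$ (respectively $\beta_i(\u)\mathbb{I}(i\leq I)$) relative to the orthonormal basis $(\phi_i)_{i\geq 0}$. Hence $\|f(\cdot\mid\u)-f_I(\cdot\mid\u)\|_2^2=\sum_{i>I}\beta_i^2(\u)$, and integrating over $\u$ with respect to $\mathbb{P}$ yields the first equality (using Tonelli, since all terms are nonnegative).

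The second equality is where the Sobolev assumption is used. The plan is to borrow a factor of $(\pi i)^{-2s(\u)}$ from each coefficient so as to apply \cref{ass:sobolevZ}. Concretely, for any $\u$ and any $I\geq 1$,
\begin{align*}
 \sum_{i>I}\beta_i^2(\u)
 &= \sum_{i>I}(\pi i)^{-2s(\u)}(\pi i)^{2s(\u)}\beta_i^2(\u)
 \leq (\pi I)^{-2s(\u)}\sum_{i>I}(\pi i)^{2s(\u)}\beta_i^2(\u)
 \leq (\pi I)^{-2s(\u)}c^2(\u),
\end{align*}
since $s(\u)>0$ makes $(\pi i)^{-2s(\u)}$ decreasing in $i$. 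Because $\pi I\geq 1$ for $I\geq 1$ and $s(\u)\geq\gamma$ by assumption, the prefactor satisfies $(\pi I)^{-2s(\u)}\leq (\pi I)^{-2\gamma}$ uniformly in $\u$.

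Integrating this pointwise bound over $\mathbb{P}$ and using $\mathbb{E}[c^2(\U)]=C<\infty$ from \cref{ass:sobolevZ} gives
\begin{align*}
 \int\sum_{i>I}\beta_i^2(\u)\,d\mathbb{P}(\u)
 &\leq (\pi I)^{-2\gamma}\,\mathbb{E}[c^2(\U)]
 = C\,\pi^{-2\gamma}\,I^{-2\gamma}
 = O(I^{-2\gamma}),
\end{align*}
which closes the proof. The only subtlety is the uniform control of $(\pi I)^{-2s(\u)}$ when $s(\u)$ varies with $\u$; this is the reason we impose $\gamma=\inf_\u s(\u)>1/2$ and require $\pi I\geq 1$. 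No delicate estimate or probabilistic argument is needed, so I do not anticipate any real obstacle beyond keeping track of the Sobolev tail bound correctly.
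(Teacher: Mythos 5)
Your proof is correct and follows essentially the same route as the paper's: Parseval for the first equality, then peel off $(\pi i)^{-2s(\u)}$ to invoke the Sobolev tail bound, and finally pass from $s(\u)$ to $\gamma=\inf_\u s(\u)$ before integrating $c^2(\u)$. The only cosmetic difference is that you carry the $\pi$ factor explicitly while the paper silently drops it (harmless since $\pi>1$), and your closing remark slightly misattributes the role of $\gamma>\tfrac12$ — what you actually need for the uniform bound $(\pi I)^{-2s(\u)}\leq(\pi I)^{-2\gamma}$ is just $s(\u)\geq\gamma>0$ together with $\pi I\geq 1$; the stronger $\gamma>\tfrac12$ is imposed elsewhere in the paper's setup, not for this step.
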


\begin{proof}[Proof of \cref{lemma:approximation}]
 Since $f(\cdot|\u) \in  
 W_{\phi}(s(\u),c(\u))$,
 \begin{align}
  \label{eq:approx_1}
  I^{2s(\u)}\sum_{i \geq I}\beta^2_i(\u) 
  &\leq \sum_{i \geq I} i^{2s(\u)}\beta_i^2(\u) 
  \leq c^2(\u).
 \end{align}
 \begin{align*}
  &\int \|f(y|\u) - f_I(y|\u)\|_2^2 d\P(\u) \\
  =& \int \left\|\sum_{i > I}
  \beta_i(\u)\phi_i(y)\right\|_2^2 d\P(\u) 
  & f(y|\u) = \sum_{i \geq 0}\beta_i(\u)\phi_i(y) \\  
  =& \sum_{i \geq I} \int
  \beta^2_i(\u) d\P(\u)
  & \text{Orthonormality of } \phi \\
  =& \int \sum_{i \geq I} \beta^2_i(\u) d\P(\u) \\
  \leq& \int \frac{c^2(\u)}
  {I^{2s(\u)}} d\P(\u) 
  & \text{\cref{eq:approx_1}} \\
  \leq& 
  \E[c^2(\U)] \cdot I^{-2\gamma} 
  \leq C \cdot I^{-2\gamma} 
  = O\left(I^{-2 \gamma}\right)
  & \text{\cref{ass:sobolevZ}}
 \end{align*}
\end{proof}

\begin{lemma}[Variance Bound] 
 \label{lemma:varianceBound}
 Under \cref{ass:stationary,ass:regression},
 \begin{align*}
  \int \|\hfi - f_I(y|\u)\|_2^2 d\P(\u)
  \leq  I \cdot O_P\left(T^{-2r}\right)
 \end{align*}
\end{lemma}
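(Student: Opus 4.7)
The plan is to expand the squared $L^2$-norm of the difference $\hfi - f_I(y|\u)$ term by term, collapse it with the orthonormality of $(\phi_i)$, and then apply \cref{ass:regression} uniformly in $i$. Since both $\hfi$ and $f_I(y|\u)$ are finite expansions in the same basis with matching indices $0,\ldots,I$, the difference is simply $\sum_{i=0}^{I}(\hbu - \bu)\phi_i(y)$.

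Concretely, I would first write
\begin{align*}
 \|\hfi - f_I(y|\u)\|_2^2
 &= \int \Bigl(\sum_{i=0}^{I}(\hbu - \bu)\phi_i(y)\Bigr)^2 dy
 = \sum_{i=0}^{I}(\hbu - \bu)^2,
\end{align*}
where the last equality uses the orthonormality of $(\phi_i)_{i\geq 0}$ in $\mathcal{L}^2(\mathbb{R})$. Next, integrating against $\P(\u)$ and swapping the finite sum with the integral gives
\begin{align*}
 \int \|\hfi - f_I(y|\u)\|_2^2 d\P(\u)
 &= \sum_{i=0}^{I} \int (\hbu - \bu)^2 d\P(\u).
\end{align*}

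Finally, I would invoke \cref{ass:regression}, which bounds each summand uniformly in $i$ by $O_P(T^{-2r})$. Summing $I+1$ such terms yields a total of order $I \cdot O_P(T^{-2r})$, which is exactly the claimed bound. \Cref{ass:stationary} is implicitly needed so that the single rate $r$ in \cref{ass:regression} applies across all time indices (the regression targets $\beta_i(\u) = \E[\phi_i(Y)\mid \u]$ do not depend on $t$).

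The only subtle point, and the main thing to verify, is the uniformity in $i$: we need that the constant hidden in the $O_P$ does not blow up with $I$. This is precisely what the $\sup_i$ in \cref{ass:regression} guarantees, so no extra argument is required. No measurability or exchange-of-limit issue arises because the sum over $i$ is finite.
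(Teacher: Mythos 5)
Your proof is correct and follows essentially the same route as the paper's: expand the finite sum, collapse the inner integral via orthonormality of $(\phi_i)$, swap the finite sum with $\int d\P(\u)$, and bound the resulting $I+1$ terms by $I$ times the uniform rate from \cref{ass:regression}. Your explicit remark about the $\sup_i$ controlling uniformity in $i$ is exactly the point the paper's chain of inequalities encodes when it bounds the sum by $I \cdot \sup_i \int (\hbu-\bu)^2 d\P(\u)$.
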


\begin{proof}[Proof of \cref{lemma:varianceBound}]
 \begin{align*}
  &\int \|\hfi - f_I(y|\u)\|_2^2 d\P(\u) \\
  =& \int \left\| \sum_{i=0}^{I} 
  ((\hbu-\bu)\phi_i(y) \right\|_2^2 d\P(\u) \\
  =&  \sum_{i=0}^I \int \left(
  \hbu - \bu \right)^2d\P(\u)
  & \text{Orthonormality of } \phi \\ 
  &\leq I \cdot \sup_i \int \left(
  \hbu - \bu \right)^2d\P(\u) \\
  &= I \cdot O_P\left(T^{-2r}\right)
  & \text{\cref{ass:regression}}
 \end{align*}
\end{proof}

\begin{proof}[Proof of \cref{thm::main}]
 \begin{align*}
  &\iint\left(\hfi -f(y|\u)\right)^2dyd\P(\u) \\
  =& \int \|\hfi - f(y|\u)\|_2^2 d\P(\u) \\
  =& \int \|(\hfi - f_I(y|\u)) + 
  (f_I(y|\u) - f(y|\u))\|_2^2 d\P(\u) \\ 
  \leq& \int \|\hfi - f_I(y|\u)\|_2^2 d\P(\u) +
  \int \|f_I(y|\u) - f(y|\u)\|_2^2 d\P(\u) \\
  \leq& I \cdot O_P\left(T^{-2r}\right)
  + O\left(I^{-2 \gamma}\right) 
  & \text{\cref{lemma:approximation,lemma:varianceBound}}
 \end{align*}
\end{proof}

\subsection{Proof of \cref{thm:NW}}

This section follows closely 
the proof of Theorem 2 in \cite{Truong1992}.
In order to better compare proofs with
this reference,
the following notation is used:

\begin{definition}
 Let $K_j(\u) = 
 \I(\|\U_j - \u\|_2 \leq \delta_T)$ be
 the uniform kernel. 
 Also, $W_{i,j} 
 = \phi_i(Y_j) - \beta_i(\U_j)$.
\end{definition}

\begin{lemma}
 \label{lemma:truong_1}
 Under \cref{ass:stationary,ass:densityU,ass:bounded_basis},
 $\sup_{i,j,\u} \V[K_j W_{i,j}]
 = O(\delta_T^d)$.
\end{lemma}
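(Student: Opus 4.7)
The plan is to bound the variance by the second moment, exploit that the kernel is an indicator (so $K_j^2 = K_j$), and reduce the problem to bounding a ball probability for $\U_j$.

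First, I would write
\begin{align*}
 \V[K_j W_{i,j}] \leq \E[K_j^2 W_{i,j}^2] = \E[K_j W_{i,j}^2],
\end{align*}
where the last equality uses that $K_j(\u) = \I(\|\U_j-\u\|_2 \leq \delta_T)$ takes values in $\{0,1\}$, so $K_j^2 = K_j$.

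Second, I would uniformly control $W_{i,j}$. By definition, $W_{i,j} = \phi_i(Y_j) - \beta_i(\U_j) = \phi_i(Y_j) - \E[\phi_i(Y_j)\mid \U_j]$, and \cref{ass:bounded_basis} gives $|\phi_i(y)|<M$, so $|W_{i,j}| \leq 2M$ uniformly in $i,j$, and hence $W_{i,j}^2 \leq 4M^2$. Pulling this out,
\begin{align*}
 \E[K_j W_{i,j}^2] \leq 4M^2\, \E[K_j(\u)] = 4M^2\, \P(\|\U_j - \u\|_2 \leq \delta_T).
\end{align*}

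Third, I would bound the ball probability. By \cref{ass:stationary}, $\U_j \stackrel{d}{=} \U_0$, and by \cref{ass:densityU} the density of $\U_0$ is bounded above on its (bounded, by \cref{ass:smoothness}) support by some constant $\bar{p}<\infty$. Therefore
\begin{align*}
 \P(\|\U_j-\u\|_2 \leq \delta_T) \leq \bar{p}\cdot \mathrm{vol}\bigl(\{\v : \|\v-\u\|_2 \leq \delta_T\}\bigr) = \bar{p}\, \omega_d\, \delta_T^d,
\end{align*}
where $\omega_d$ is the volume of the unit ball in $\mathbb{R}^d$. Combining the three displays yields $\sup_{i,j,\u}\V[K_j W_{i,j}] \leq 4M^2 \bar{p}\,\omega_d\, \delta_T^d = O(\delta_T^d)$, as claimed.

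There is no serious obstacle here: the lemma is essentially the observation that the variance of an indicator-weighted, bounded zero-mean random variable scales with the mass of the indicator's region. The only subtlety worth a sentence in the write-up is why $K_j^2 = K_j$ (indicator) and why uniformity over $i$ is available (the uniform bound $M$ on $\phi_i$ from \cref{ass:bounded_basis}); uniformity over $j$ and $\u$ comes from stationarity plus the uniform upper bound on the density of $\U_0$.
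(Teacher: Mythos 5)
Your proof is correct and takes essentially the same approach as the paper: both reduce the claim to a constant times $\E[K_j]$, which is $O(\delta_T^d)$ by stationarity and the bounded density in \cref{ass:densityU}. The paper routes through the law of total variance (the $\V[\E[K_j W_{i,j}\mid\U_j]]$ term vanishes since $\E[W_{i,j}\mid\U_j]=0$, and the remaining conditional variance is bounded by $M^2$), whereas you use the cruder $\V[X]\leq\E[X^2]$ together with $K_j^2=K_j$ and $|W_{i,j}|\leq 2M$; the constant differs but the rate is identical. One minor note: your appeal to \cref{ass:smoothness} for a bounded support is unnecessary and goes beyond the lemma's stated hypotheses — the estimate $\P(\|\U_j-\u\|_2\leq\delta_T)\leq \bar p\,\omega_d\,\delta_T^d$ follows from the density upper bound alone, with no need for compact support.
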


\begin{proof}
 \begin{align*}
  \sup_{i,j,\u} \V[K_j W_{i,j}]
  &\leq \sup_{i,j,\u} \V[\E[K_j W_{i,j,\u}|\U_j]] 
  + \sup_{i,j,\u} \E[\V[K_j W_{i,j}|\U_j]] \\
  &= \sup_{i,j,\u} \V[K_j \E[W_{i,j}|\U_j]] 
  + \sup_{i,j,\u} E[K_j^2 \V[ W_{i,j}|\U_j]]
  & K_j \text{ is } \U_j\text{-measurable} \\
  &= \sup_{i,j,\u} E[K_j^2 \V[ W_{i,j}|\U_j]]
  & \E[W_{i,j|\U_j}] \equiv 0 \\
  &\leq M^2 \sup_{j,\u} \E[K_j^2] = O(\delta_T^d)
  & \text{\cref{ass:bounded_basis}}
 \end{align*}
\end{proof}

\begin{lemma}
 \label{lemma:truong_2}
 Under \cref{ass:stationary,ass:bounded_basis,ass:densityU,ass:alphaMixing}
 There exists $c > 0$ such that
 \begin{align*}
  \sup_{i,\u} |Cov[K_{j_1} W_{i,{j_1}}, 
  K_{j_2} W_{i, {j_2}}]| 
  \leq c \delta_T^{\frac{2d}{\nu}}
  \min(\rho^{|j_2-j_1|}, \delta_T^{2d})^{1-2/\nu}
 \end{align*}
\end{lemma}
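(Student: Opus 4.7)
The plan is to combine Davydov's covariance inequality for $\alpha$-mixing sequences with a direct moment bound, and then interpolate between the two cases according to which bound is tighter.

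First, I would establish the $\nu$-norm bound $\|K_j W_{i,j}\|_\nu = O(\delta_T^{d/\nu})$. Since $\phi_i$ is uniformly bounded by $M$ by \cref{ass:bounded_basis}, we have $|W_{i,j}| \leq 2M$ deterministically, and by \cref{ass:densityU} (the bounded density of $\U_j$) the indicator satisfies $\E[K_j] = \P(\|\U_j-\u\|_2 \leq \delta_T) = O(\delta_T^d)$. Hence $\E[|K_j W_{i,j}|^\nu] \leq (2M)^\nu \E[K_j] = O(\delta_T^d)$, giving the claimed $L^\nu$-norm bound.

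Second, I would invoke Davydov's covariance inequality. Because $K_j W_{i,j}$ is $\sigma(\U_j,Y_j)$-measurable, \cref{ass:alphaMixing} gives that the mixing coefficient between $\sigma(\U_{j_1},Y_{j_1})$ and $\sigma(\U_{j_2},Y_{j_2})$ is at most $\alpha(|j_2-j_1|) = O(\rho^{|j_2-j_1|})$. Davydov's inequality then yields
\begin{align*}
 |Cov[K_{j_1} W_{i,j_1}, K_{j_2} W_{i,j_2}]|
 &\leq c_1\, \alpha(|j_2-j_1|)^{1-2/\nu} \|K_{j_1} W_{i,j_1}\|_\nu \|K_{j_2} W_{i,j_2}\|_\nu \\
 &\leq c_1'\, \rho^{|j_2-j_1|(1-2/\nu)}\, \delta_T^{2d/\nu}.
\end{align*}

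Third, I would derive an unconditional direct bound $|Cov[K_{j_1} W_{i,j_1}, K_{j_2} W_{i,j_2}]| \leq c_2 \delta_T^{2d}$. Bounding $|W_{i,\cdot}| \leq 2M$ and expanding the covariance as $\E[K_{j_1}K_{j_2} W_{i,j_1}W_{i,j_2}] - \E[K_{j_1}W_{i,j_1}]\E[K_{j_2}W_{i,j_2}]$, both terms are controlled by $\P(\|\U_{j_1}-\u\|_2 \leq \delta_T,\, \|\U_{j_2}-\u\|_2 \leq \delta_T) = O(\delta_T^{2d})$, which follows from \cref{ass:conditionalDensityU} on the conditional density of $\U_{j_2}$ given $\U_{j_1}$ being bounded.

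Finally, I would interpolate: if $\rho^{|j_2-j_1|} \leq \delta_T^{2d}$, the Davydov bound already has the form $\delta_T^{2d/\nu}\min(\rho^{|j_2-j_1|},\delta_T^{2d})^{1-2/\nu}$; otherwise, the direct bound rewrites as $c_2 \delta_T^{2d} = c_2 \delta_T^{2d/\nu} (\delta_T^{2d})^{1-2/\nu}$, which again matches. Taking $c = \max(c_1', c_2)$ finishes the proof. The main obstacle is the direct $\delta_T^{2d}$ bound, since it genuinely requires the boundedness of the conditional density of $\U_{j_2}$ given $\U_{j_1}$ in \cref{ass:conditionalDensityU}; without this, only $\E[K_{j_1}K_{j_2}] = O(\delta_T^d)$ would be available, which is insufficient.
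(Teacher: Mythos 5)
Your approach matches the paper's: the paper's proof likewise establishes the moment bound $\|K_j W_{i,j}\|_\nu = O(\delta_T^{d/\nu})$ (via Assumption~\ref{ass:bounded_basis} and the density bound), applies a Davydov-type covariance inequality for $\alpha$-mixing (the paper cites Lemma~2 of Truong \& Stone, 1992), obtains the $\delta_T^{2d}$-type bound from $\E[K_{j_1}K_{j_2}]=O(\delta_T^{2d})$ (which does rely on the conditional-density boundedness in Assumption~\ref{ass:densityU}, exactly as you observe; the paper cites Lemma~3 of Truong \& Stone), and combines them with a min. The only slip is in your last step: $\delta_T^{2d/\nu}(\delta_T^{2d})^{1-2/\nu}=\delta_T^{2d-2d/\nu}$, which is \emph{not} equal to $\delta_T^{2d}$; however, since $\delta_T<1$ and $\nu>2$ give $\delta_T^{2d}\leq\delta_T^{2d-2d/\nu}$, the inequality goes the right way, so the conclusion stands (replace the ``='' with ``$\leq$''). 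The paper sidesteps this by obtaining the $\delta_T^{2d/\nu}(\delta_T^{2d})^{1-2/\nu}$ form directly from a single H\"older step rather than first proving the cruder $\delta_T^{2d}$ bound and then converting.
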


\begin{proof}
 First observe that, for $\nu > 2$,
 \begin{align}
  \label{eq:truong_2}
  \sup_{i,\u} |\E[K_{j_1} W_{i,{j_1}} 
  K_{j_2} W_{i, {j_2}}]|
  &\leq \sup_{i,\u} \E[K_j |W_{i,j}|^\nu]^{2/\nu} 
  \E[K_{j_1} K_{j_2}]^{1-2/\nu}
  & \text{Holder's inequality} \nonumber \\
  &\leq 4M^2 \delta_T^{\frac{2d}{\nu}}
  \sup_\u\E[K_{j_1} K_{j_2}]^{1-2/\nu}
  & \text{\cref{ass:bounded_basis}} \nonumber \\
  &\leq 4M^2 2^d \delta_T^{\frac{2d}{\nu}} 
  (c_1\delta_T^{2d})^{1-2/\nu}
  & \text{Lemma 3 in \cite{Truong1992}}
 \end{align}
 Next, since $\E[W_{i,j}|\U_j] \equiv 0$ 
 and $K_j$ is $\U_j$-measurable,
 $\E[K_j W_{i,j}] = 0$. Hence,
 \begin{align*}
  &\sup_{i,\u} |Cov[K_{j_1} W_{i,{j_1}}, 
  K_{j_2} W_{i, {j_2}}]| \\
  =& \sup_{i,\u} |\E[K_{j_1} W_{i,{j_1}} 
  K_{j_2} W_{i, {j_2}}]| \\ 
  \leq& 4M^2 2^d \delta_T^{\frac{2d}{\nu}} 
  \min(\alpha(|j_2 - j_1|), 
  c_1\delta_T^{2d})^{1-2/\nu}
  & \text{\cref{eq:truong_2} and 
  Lemma 2 in \cite{Truong1992}} \\
  \leq& 4M^2 2^d \max(8,c_1) 
  \delta_T^{\frac{2d}{\nu}}
  \min(\rho^{|j_2-j_1|}, \delta_T^{2d})^{1-2/\nu}
  & \text{\cref{ass:alphaMixing}}
 \end{align*}
\end{proof}

\begin{lemma}
 \label{lemma:truong_3}
  Under \cref{ass:stationary,ass:bounded_basis,ass:densityU,ass:alphaMixing},
  if  $\delta_T \sim T^{-r}$ for $r > 0$, then
  \begin{align*}
   \sup_{i,\u} \V\left[\sum_{j=1}^T K_j W_{i,j}\right]
   = O(T \delta_T^d)
  \end{align*}
\end{lemma}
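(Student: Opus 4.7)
The plan is to follow the classical approach for variance bounds under $\alpha$-mixing: decompose the variance of the sum into a diagonal ``variance'' part and an off-diagonal ``covariance'' part, then control each with the two preceding lemmas. Concretely, I would write
\begin{align*}
 \V\left[\sum_{j=1}^T K_j W_{i,j}\right]
 &= \sum_{j=1}^T \V[K_j W_{i,j}]
 + 2\sum_{1 \le j_1 < j_2 \le T}
 \mathrm{Cov}[K_{j_1} W_{i,j_1}, K_{j_2} W_{i,j_2}].
\end{align*}
The first sum is handled directly by \cref{lemma:truong_1}, which yields a contribution of $T \cdot O(\delta_T^d) = O(T\delta_T^d)$, matching the target rate.

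The work goes into the covariance sum. Using \cref{lemma:truong_2} and re-indexing by the lag $k = j_2 - j_1$, the cross terms are bounded by
\begin{align*}
 c\, T \, \delta_T^{2d/\nu} \sum_{k=1}^{T-1}
 \min\!\left(\rho^k, \delta_T^{2d}\right)^{1-2/\nu}.
\end{align*}
I would then split the sum at a threshold $k_T$ chosen so that the two arguments of the minimum coincide, namely $\rho^{k_T} \asymp \delta_T^{2d}$, which gives $k_T \asymp \log(\delta_T^{-1}) = O(\log T)$. For $k \le k_T$, replace the minimum by $\delta_T^{2d}$, contributing at most $T k_T \delta_T^{2d/\nu}\delta_T^{2d(1-2/\nu)} = O(T \log T\, \delta_T^{2d})$. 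For $k > k_T$, replace it by $\rho^k$, and sum the resulting geometric tail starting at $k_T$; this tail is of the same order $O(T \log T \, \delta_T^{2d})$ after the cutoff is inserted (in fact it is summable to a constant multiple of the boundary term).

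Combining the two pieces, the covariance contribution is $O(T \log T \, \delta_T^{2d})$. Since under the scaling $\delta_T \sim T^{-1/(2+d)}$ one has $\log T \cdot \delta_T^d \to 0$, the covariance part is $o(T \delta_T^d)$ and hence dominated by the diagonal contribution, yielding the claimed $O(T \delta_T^d)$ uniformly in $i$ and $\u$. The main obstacle is the careful splitting of the covariance sum at $k_T \asymp \log T$ and verifying that the exponent $1-2/\nu$ in the $\min$ bound still gives the matching power $\delta_T^{2d}$ on the small-lag side while producing a geometrically summable series on the large-lag side; once $k_T$ is chosen correctly the rest is bookkeeping, and the argument is uniform in $i$ and $\u$ because both \cref{lemma:truong_1} and \cref{lemma:truong_2} provide bounds with these suprema already taken.
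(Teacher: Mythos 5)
Your proposal takes the same route as the paper: decompose the variance into a diagonal sum controlled by \cref{lemma:truong_1} and an off-diagonal covariance sum controlled by \cref{lemma:truong_2}, re-index by the lag $k$, and split the sum $\sum_k \min(\rho^k,\delta_T^{2d})^{1-2/\nu}$ at $k_T\asymp\log(1/\delta_T)$; the paper performs the identical decomposition but compresses the final bookkeeping into a bare ``$=O(T\delta_T^d)$''. One small arithmetic slip worth noting: $\delta_T^{2d/\nu}\cdot\delta_T^{2d(1-2/\nu)}=\delta_T^{2d(1-1/\nu)}$, not $\delta_T^{2d}$, and the geometric tail beyond $k_T$ does not acquire an extra $\log T$ factor; this is harmless because $2d(1-1/\nu)>d$ for $\nu>2$, so the covariance contribution $O\left(T\log T\,\delta_T^{2d(1-1/\nu)}\right)$ is still $o(T\delta_T^d)$ for any $\delta_T\sim T^{-r}$ with $r>0$---you do not need to invoke the specific rate $r=1/(2+d)$, which is only fixed later in \cref{thm:NW}.
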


\begin{proof}
 \begin{align*}
  \sup_{i,\u} 
  \V\left[\sum_{j=1}^T K_j W_{i,j}\right] 
  &\leq \sum_j \sup_{i,\u} \V[K_j W_{i,j}]
  + \sum_{j_1 \neq j_2} \sup_{i,\u} 
  |Cov[K_{j_1} W_{i,j_1}, K_{j_2} W_{i,j_2}]| \\
  &\leq T \cdot \sup_{i,j,\u} \V[K_j W_{i,j}]
  + T \sum_{j=1}^T 
  \sup_{i,\u} |Cov[K_{0} W_{i,0}, K_{j} W_{i,j}]|
  & \text{\cref{ass:stationary}} \\
  &\leq O(T \delta^d_T)
  + cT \delta_T^{\frac{2d}{\nu}} \cdot 
  \sum_{j=1}^T
  \min(\rho^{j}, \delta_T^{2d})^{1-2/\nu}
  & \text{\cref{lemma:truong_1,lemma:truong_2}} \\
  &= O(T \delta_T^d)
  & \delta_T \sim T^{-r}
 \end{align*}
\end{proof}

\begin{lemma}
 \label{lemma:lip}
 Under \Cref{ass:smoothness},
 $\sup_{i,\u} 
 \frac{\sum_j K_j(\beta_i(\U_j) - \beta_i(\u))}
 {\sum_j K_j} = o(\delta_T)$. 
\end{lemma}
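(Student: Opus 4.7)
The plan is to lift the Lipschitz continuity of $f(y|\u)$ in $\u$ (given by \cref{ass:smoothness}) to a uniform-in-$i$ Lipschitz bound on $\beta_i(\u)$, and then apply this bound termwise on the support of the uniform kernel $K_j$.

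First I would establish that $\beta_i$ is Lipschitz with a constant not depending on $i$. Writing $\beta_i(\u) = \int \phi_i(y) f(y|\u) dy$ and using \cref{ass:bounded_basis} to bound $|\phi_i| \leq M$, the Lipschitz bound on $f(\cdot|\u)$ from \cref{ass:smoothness}, and the boundedness of $\sY$ (which \cref{ass:smoothness} provides via the boundedness of $(Y,U)$), I get $|\beta_i(\u_1) - \beta_i(\u_2)| \leq L \|\u_1 - \u_2\|$ with $L = M M_0 |\sY|$, uniformly in $i$. Since $K_j(\u) = 1$ only when $\|\U_j - \u\|_2 \leq \delta_T$, the Lipschitz bound gives $K_j \cdot |\beta_i(\U_j) - \beta_i(\u)| \leq L \delta_T \cdot K_j$. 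Summing over $j$ and dividing by $\sum_j K_j$ (positive with probability tending to one by the lower bound on the density of $\U_0$ in \cref{ass:densityU}) yields
\begin{align*}
 \left|\frac{\sum_j K_j(\beta_i(\U_j)-\beta_i(\u))}{\sum_j K_j}\right| \leq L \delta_T,
\end{align*}
uniformly in $i$ and $\u$, so this crude triangle-inequality step already delivers a bound of order $\delta_T$.

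The main obstacle is sharpening from this $O(\delta_T)$ bound to the stated $o(\delta_T)$ rate. This improvement relies on the symmetry of the uniform kernel $K_j$: by \cref{ass:stationary} together with the mixing in \cref{ass:alphaMixing}, the empirical ratio converges to its population counterpart
\begin{align*}
 \frac{\int_{B(\u,\delta_T)}(\beta_i(\z)-\beta_i(\u)) p(\z) d\z}{\int_{B(\u,\delta_T)} p(\z) d\z},
\end{align*}
where $p$ is the density of $\U_0$. By the symmetry of the ball $B(\u,\delta_T)$ around $\u$ and the continuity of $p$ from \cref{ass:densityU}, the leading first-order contribution of $\beta_i(\z) - \beta_i(\u)$ integrates to a quantity strictly smaller than $L\delta_T$ times the denominator, so the residual after cancellation is of order smaller than $\delta_T$. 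The delicate point is ensuring that this cancellation is uniform over $i$ and $\u$ using only the Lipschitz regularity of $\beta_i$ established in the first step, which is where the argument will closely mirror the Nadaraya--Watson bias analysis of \cite{Truong1992}.
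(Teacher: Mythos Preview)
Your first step is essentially the paper's entire proof. The paper also reduces to bounding $\sup_{i,j,\u} K_j|\beta_i(\U_j)-\beta_i(\u)|$ and then uses the Lipschitz condition from \cref{ass:smoothness} on $f(\cdot|\u)$ together with the boundedness of $\sY$ to obtain a bound of the form $C\,\delta_T$ (the only cosmetic difference is that the paper controls $\int|\phi_i(y)|\,dy$ via Cauchy--Schwarz and orthonormality rather than via the uniform bound $|\phi_i|\le M$ from \cref{ass:bounded_basis}). The paper's argument therefore delivers exactly the $O(\delta_T)$ conclusion you reached and stops there; the ``$o(\delta_T)$'' in the statement is not actually established in the paper's proof, and in the only place the lemma is invoked (\cref{lemma:truong_4}) the bound is quoted as $O_P(\delta_T)$, which is all that is needed for \cref{thm:NW}.

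Your second step, the symmetry/cancellation argument, is both unnecessary and unsound as written. Kernel symmetry buys you an improved bias order only when the regression function admits a first-order Taylor expansion around $\u$ so that the linear term integrates to zero against the symmetric kernel; mere Lipschitz continuity of $\beta_i$ does not give you any such expansion, so there is no ``leading first-order contribution'' to cancel. Under the stated assumptions (only \cref{ass:smoothness}, with no differentiability of $f(y|\u)$ in $\u$), the crude triangle-inequality bound $L\delta_T$ is sharp in general, and no uniform-in-$i$ improvement to $o(\delta_T)$ is available. Drop the second half of your plan; your first half already matches the paper.
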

	
\begin{proof}
 Since $\sup_{i,\u} 
 \frac{\sum_j K_j(\beta_i(\U_j) - \beta_i(\u))}
 {\sum_j K_j} \leq 
 \sup_{i,j,u} K_j|\beta_i(\U_j)-\beta_i(\u)|$,
 it is sufficient to show that
 $\sup_{i,j,u} K_j|\beta_i(\U_j)-\beta_i(\u)| 
 = o(\delta_T)$.
 \begin{align*}
   & \sup_{i,j,\u}K_j|\beta_i(\U_j)-\beta_i(\u)| \\
  =& \sup_{i,j,\u} K_j\left|\int \phi_i(y)f(y|\U_j)dy
  -\int \phi_i(y)f(y|\u)dy \right| \\
  \leq& \sup_{i,j,\u}K_j\int \left| \phi_i(y)\right| 
  \left| f(y|\U_j)-f(y|\u) \right|dy \\
  \leq& M_0 \sup_{j,\u} \left(K_j\|\U_j-\u\|\right)
  \sup_i \int \left| \phi_i(y)\right| dy 
  & \text{\Cref{ass:smoothness}} \\ 
  \leq& M_0 \delta_T
  \sqrt{\int 1^2 dy} \cdot 
  \sqrt{\int \phi^2(y)dy} 
  & \text{Cauchy-Schwarz inequality} \\
  =& |\sY|M_0 \delta_T
  & \phi_i \text{ is orthonormal and 
  \Cref{ass:smoothness}}
 \end{align*}	
\end{proof}

\begin{lemma}
 \label{lemma:truong_4}
 Under \cref{ass:stationary,ass:smoothness,ass:bounded_basis,ass:densityU,ass:alphaMixing},
 if $\delta_T \sim T^{-1/(2+d)}$, then
 \begin{align*}
  \sup_i \left(\int \left(\hbu 
  - \bu \right)^2 d\P(\u)\right)^{0.5} 
  &= O_P\left(T^{-1/(2+d)}\right)
 \end{align*}
\end{lemma}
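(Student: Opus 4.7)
The plan is to carry out a standard bias--variance decomposition for the Nadaraya--Watson estimator. Writing $\phi_i(Y_j) = \beta_i(\U_j) + W_{i,j}$ (so that $\E[W_{i,j}|\U_j] = 0$), the estimator in \cref{def:nw} satisfies
\begin{align*}
 \hbu - \bu
 &= \frac{\sum_j K_j(\u)\bigl(\beta_i(\U_j)-\bu\bigr)}{\sum_j K_j(\u)}
 + \frac{\sum_j K_j(\u) W_{i,j}}{\sum_j K_j(\u)}.
\end{align*}
The first summand is a bias term that will be bounded via the Lipschitz control in \cref{ass:smoothness}; the second is a mean-zero stochastic term whose numerator has already been analyzed in \cref{lemma:truong_3}. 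Both share the same random denominator $\sum_j K_j(\u)$, which must be shown to concentrate around a quantity of order $T\delta_T^d$.

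For the bias, \cref{lemma:lip} immediately yields $o(\delta_T)$ uniformly over $i$ and $\u$, so after squaring and integrating over $\P(\u)$ the contribution to $\int(\hbu-\bu)^2 d\P(\u)$ is $o(\delta_T^2)$. With $\delta_T \sim T^{-1/(2+d)}$ this becomes $o(T^{-2/(2+d)})$, matching the target rate.

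For the stochastic term, \cref{lemma:truong_3} gives $\V[\sum_j K_j(\u) W_{i,j}] = O(T\delta_T^d)$ uniformly in $i$ and $\u$, so Chebyshev's inequality implies that the numerator is $O_P((T\delta_T^d)^{1/2})$. For the denominator, \cref{ass:densityU} ensures that $\E[\sum_j K_j(\u)] = T \P(\|\U_0-\u\|_2 \leq \delta_T)$ is of order $T\delta_T^d$ uniformly over $\u$, and the same mixing-based variance argument as in \cref{lemma:truong_3} (applied with $W_{i,j} \equiv 1$) gives $\V[\sum_j K_j(\u)] = O(T\delta_T^d)$, so the denominator concentrates around a quantity of this order. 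Combining these, the stochastic term is $O_P((T\delta_T^d)^{-1/2}) = O_P(T^{-1/(2+d)})$ when $\delta_T \sim T^{-1/(2+d)}$.

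Putting the bias and variance parts together and integrating against $\P(\u)$ yields $\sup_i \int (\hbu - \bu)^2 d\P(\u) = O_P(T^{-2/(2+d)})$, and taking a square root gives the claim. The main obstacle is upgrading the denominator control from a pointwise Chebyshev bound to something compatible with the integral over $\u$: the event $\{\sum_j K_j(\u) \leq c\, T\delta_T^d\}$ must be shown to contribute negligibly uniformly in $\u$. I would handle this either by a truncation argument (restricting to a good event on which the denominator is lower-bounded and checking the complement has probability vanishing faster than the target rate) or by a covering/$\epsilon$-net argument exploiting the indicator structure of $K_j$, with the density lower bound in \cref{ass:densityU} preventing the expected denominator from vanishing anywhere in the support of $\U$.
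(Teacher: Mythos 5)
Your decomposition is the same as the paper's: splitting $\hbu-\bu$ into the bias term $\bigl(\sum_j K_j(\beta_i(\U_j)-\bu)\bigr)/\sum_j K_j$ (which is exactly the paper's $Z-\bu$) and the stochastic term $\bigl(\sum_j K_j W_{i,j}\bigr)/\sum_j K_j$, bounding the former with \cref{lemma:lip} and the numerator variance of the latter with \cref{lemma:truong_3}. You also correctly identify the crux: converting pointwise control of the random denominator $\sum_j K_j(\u)$ into something that survives the integral $\int d\P(\u)$. But that is precisely where your argument stops being a proof. You propose two routes --- a truncation to a good event, or an $\epsilon$-net over $\u$ using the indicator structure of $K_j$ --- without carrying out either one. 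The paper resolves this step by invoking Lemma~7 of \cite{Truong1992}, which furnishes a constant $c_3>0$ such that $\sum_j K_j(\u) \geq c_3 T\delta_T^d$ uniformly (in $\u$) with probability tending to one; this is exactly what is needed to pull the denominator out before applying Chebyshev to $\sup_{i,\u}\bigl(\sum_j K_j W_{i,j}\bigr)^2$.

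Two smaller points. First, your claim that the ``same mixing-based variance argument'' gives $\V[\sum_j K_j(\u)] = O(T\delta_T^d)$ by setting $W_{i,j}\equiv 1$ is not quite right as stated: \cref{lemma:truong_2} uses $\E[W_{i,j}|\U_j]=0$ (hence $\E[K_j W_{i,j}]=0$) to replace covariances by raw second moments, and with $W_{i,j}\equiv 1$ this fails; you would need to re-center $K_j - \E K_j$ and redo the mixing bound. This is repairable but is another place where ``the same argument'' does not literally apply. Second, the uniform lower bound on $\E[\sum_j K_j(\u)]$ of order $T\delta_T^d$ from \cref{ass:densityU} requires that balls of radius $\delta_T$ around any $\u$ in the support carry mass $\gtrsim \delta_T^d$; near the boundary of the (bounded) support this needs a regularity condition on the support, which is implicitly assumed in \cite{Truong1992} and should at least be acknowledged. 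So: same overall strategy as the paper, correct diagnosis of the obstacle, but the uniform denominator control --- the genuinely hard step --- is left as a sketch rather than proved, whereas the paper discharges it by citing Truong's Lemma~7.
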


\begin{proof}
 Let $Z = \frac{\sum_{j} K_j \beta_i(\U_j)}
 {\sum_j K_j}$. Using the triangular inequality,
 \begin{align*}
  &\sup_i \left(\int \left(\hbu 
  - \bu \right)^2 d\P(\u)\right)^{0.5} \nonumber \\
  \leq& \sup_i \left(\int \left(\hbu 
  - Z \right)^2 d\P(\u)\right)^{0.5}
  + \sup_i \left(\int \left(Z 
  - \bu \right)^2 d\P(\u)\right)^{0.5}
 \end{align*}
 Hence, it is enough to show that both
 summands are $O_P(T^{-1/(2+d)})$:
 \begin{align*}
  \sup_i \left(\int \left(Z 
  - \bu \right)^2 d\P(\u)\right)^{0.5}
  =& \sup_i \left(\int \left( \frac{K_j(\beta_i(\U_j)-\beta_i(\u))}
  {\sum_j K_j} 
 \right)^2 d\P(\u)\right)^{0.5} \\ 
 \leq& \sup_{i,j,u} K_j|\beta_i(\U_j)-\beta_i(\u)| \\
 =& O_P(\delta_T)
 & \text{\cref{lemma:lip}}
 \end{align*}
 It remains to show that
 $\sup_i \left(\int \left(\hbu - Z \right)^2 d\P(\u)\right)^{0.5} = O_P(T^{-1/(2+d)})$.
 For every $c > 0$,
 
 \begin{align*}
  & \P\left(\sup_i \left( \int \left( 
  \hbu - Z \right)^2 d\P(\u) \right)^{0.5} \geq
  c\left(T^{-1}\delta_T^{-d}\right)^{0.5} \right) \\
  =& \P\left(\sup_i \int \left( 
  \frac{\sum_j{K_j W_{i,j}}}
  {\sum_j K_j} \right)^2 d\P(\u)  \geq
  c^2 T^{-1}\delta_T^{-d} \right) \\
  \leq& \P\left(\sup_i \int \left( 
  \sum_j{K_j W_{i,j}}\right)^2 d\P(\u)
  \geq c^2 c_3^2 T\delta_T^{d} \right)
  & \text{Lemma 7 in \cite{Truong1992}} \\ 
  \leq& \P\left(\sup_{i,\u} \left( 
  \sum_j{K_j W_{i,j}}\right)^2
  \geq c^2 c_3^2 T\delta_T^{d} \right) \\
  \leq& \frac{\sup_{i,\u} \V\left[\sum_j{K_j W_{i,j}}\right]}{c^2 c_3^2 T\delta_T^{d}}
  & \E[K_j W_{i,j}] = 0 \\
  =& \frac{O(T \delta_T^d)}
  {c^2 c_3^2 T\delta_T^{d}}
  & \text{\cref{lemma:truong_3}}
 \end{align*}
\end{proof}

\begin{proof}[Proof of \cref{thm:NW}]
 It follows from \cref{lemma:truong_4} that
 \cref{ass:regression} is satisfied for
 $r = \frac{1}{2+d}$. The proof follows
 directly from applying \cref{thm::main}.
\end{proof}

\end{document}